\newcommand{\indep}{\perp\!\!\!\perp} 
\newcommand{\argmin}{\operatornamewithlimits{argmin}} 
\def\inprob{\stackrel{p}{\rightarrow}} 
\def\indist{\rightsquigarrow} 
\newtheorem{theorem}{Theorem}[section]
\newtheorem{lemma}{Lemma}[section]
\author{Maria Cuellar\\
Department of Criminology\\
University of Pennsylvania\\
Philadelphia, PA 19104\\
mcuellar@sas.upenn.edu
\and
Edward H. Kennedy \\
Department of Statistics \& Data Science\\
Carnegie Mellon University\\
Pittsburgh, PA 15213\\
edward@stat.cmu.edu
}
\title{A nonparametric projection-based estimator for the probability of causation, with application to water sanitation in Kenya}
\begin{document}
\maketitle



\begin{abstract}
Current estimation methods for the probability of causation (PC) make strong parametric assumptions or are inefficient. We derive a nonparametric influence-function-based estimator for a projection of PC, which allows for simple interpretation and valid inference by making weak structural assumptions. We apply our estimator to real data from an experiment in Kenya. This experiment found, by estimating the average treatment effect, that protecting water springs reduces childhood disease. However, before scaling up this intervention, it is important to determine whether it was the exposure, and not something else, that caused the outcome. Indeed, we find that some children, who were exposed to a high concentration of bacteria in drinking water and had a diarrheal disease, would likely have contracted the disease absent the exposure since the estimated PC for an average child in this study is 0.12 with a 95\% confidence interval of (0.11, 0.13). Our nonparametric method offers researchers a way to estimate PC, which is essential if one wishes to determine not only the average treatment effect, but also whether an exposure likely caused the observed outcome.

\noindent \textbf{Keywords}: causal inference, probability of causation, projection, influence functions, nonparametric, public health.

\end{abstract}


\section{Introduction}

The probability of causation (PC) is the probability that an outcome was caused by a specific exposure, and not by something else. Researchers have suggested estimating PC to answer questions of causality in which the exposure and outcome have already been observed for a group of individuals. Current methods for estimating PC and its confidence intervals use plugin estimators, which are inefficient, and parametric assumptions, which are often difficult to make correctly. In this article, we present a novel estimator that allows for nonparametric estimation and derivation of valid confidence intervals under weak structural assumptions. We illustrate our method in an application to determine whether, for children in Western Kenya who were exposed to high concentrations of bacteria in their drinking water, it was the bacteria or something else that caused their diarrheal disease. 

PC, sometimes called the probability of necessity, has been of interest for some time (\cite{mosteller, tian, pearl, pearl2014, dawidfaigmanfienberg2013, dawidmusiofienberg2016}) in the law and in epidemiology because it is especially useful in questions of ``but for'' causation, that is, in determining whether a specific outcome would not have occurred in the absence of that exposure. In other words, it is the probability that a certain outcome can be attributed to a certain exposure.

PC is especially useful whenever there is a harmful exposure and a negative outcome. For instance, suppose a man has been exposed to a harmful chemical at his work, and then he develops cancer. How can an expert witnesses testifying in court determine whether his cancer was caused by the chemical exposure? PC represents the probability that it was the exposure, and not something else, that caused the cancer. 
Formally, it is defined as
\begin{equation} 
\label{eq:PC}
PC(x) = P(Y^0=0 \mid Y=1, A=1, X=x),
\end{equation}
where $Y^0$ is the potential outcome when the exposure is set to zero, $Y$ is the observed outcome, $A$ is the exposure, and $X$ are the observed covariates. In the cancer example, this is the probability that the man would have failed to develop the cancer had he not been exposed to the chemical, given that he had cancer when he was exposed. Note that this is not equivalent to the average treatment effect on the treated because it conditions on the treated \emph{and} on those for whom a specific outcome has been observed. In a lawsuit setting, \cite{mosteller} proposed that PC should correspond to the percentage out of the full compensation that would be awarded to the person bringing the suit. \cite{dawidfaigmanfienberg2013} and \cite{dawidmusiofienberg2016} suggest instead that PC should be interpreted as my degree of belief about the attribution, a Bayesian interpretation.

PC can also be used in public health, as we do in our application. Children in Western Kenya were exposed to a high bacterial concentration in their drinking water, and then developed a diarrheal disease. What is the probability that it was the bacterial exposure, and not something else, that caused the disease? PC answers this question. Knowing whether the exposure causes the outcome on average, which is the usual question asked in randomized trials, is important. Knowing whether the exposure \emph{actually} caused the outcome (in those children who were exposed and had the outcome) is also important, especially if a policy is to have a desired impact on a target population.

In this article we seek to answer whether, for those children who were exposed to a high concentration of bacteria in their drinking water and became ill, the exposure is what caused their illness, and not something else. But estimating PC is not trivial. First, the reader might have noticed that $PC(x)$ requires knowing both the outcome under exposure and under no exposure, only one of which could be observed directly. Thus identification assumptions are required to be able to estimate PC. Then, an estimator must be used to obtain estimates and a measure of uncertainty for these estimates, such as confidence intervals. With an estimation problem that could have serious consequences on an individual's future, as it would if it is used in a legal trial or to make health policy decisions, it is essential to estimate PC in a way that gets as close to the truth as possible, and has a valid measure of uncertainty.

The current methods used to find estimates, parametric and nonparametric plugin estimators, are useful but have some drawbacks. Parametric plugin estimators---such as linear and logistic regression---allow for valid inference, but only if the parametric assumptions of the model are correct. On the other hand, simple nonparametric plugin estimators---for example based on random forests or support vector machines---allow the researcher to fit models without making parametric assumptions, but at the cost of slower convergence rates and not allowing for the construction of confidence intervals (\cite{tsiatis}).

The main methodological contribution of this article is the derivation of an estimator for PC that avoids untestable assumptions and has a valid measure of uncertainty. We derive a nonparametric influence-function-based (IFB) estimator, which targets a projection of the true PC function onto a parametric model. Our IFB estimator does not require any parametric assumptions for the nuisance functions and it can be used to derive valid confidence intervals because it is asymptotically normal, as long as the nuisance functions can be estimated at a relatively slow rate of convergence of order $o_\mathbbm{P}(n^{-1/4})$. In addition, the estimator does not require knowledge of propensity scores, and it avoids the limitations of nonparametric plugin estimators, which yield slow convergence rates compared to IFB estimators. Although obtaining valid confidence intervals without making parametric assumptions might sound somewhat magical, note that the ``catch'' is that our estimator relies on the nonparametric methods having a certain rate of convergence, and that we target a projection of the true function onto a parametric model. 

The method derived in this article also provides a novel way of analyzing data from randomized trials (and other datasets following the assumptions we listed), and thus it could be used to find new insights about causation and public policy.

\begin{figure}[ht]
\centering
\includegraphics[width=.9\textwidth]{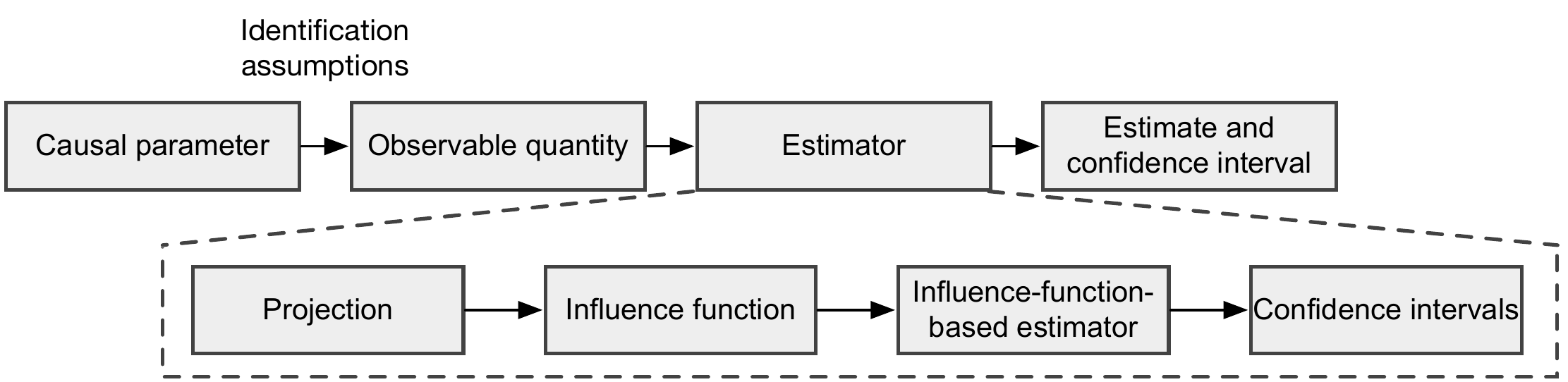}
\caption{Procedure we use to obtain estimates of the probability of causation, the causal parameter shown in Equation 1.} 
\label{fig:estimationprocess}
\end{figure}

The procedure we follow to derive estimates and confidence intervals for PC is shown in Figure \ref{fig:estimationprocess}. We define the causal parameter, make identification assumptions to obtain an observable quantity, and derive the estimator. For the estimator, we define a projection onto a parametric model, derive its influence function, derive the influence-function-based estimator, and derive its confidence intervals. Finally, we can use the estimator to obtain an estimate and confidence interval for PC. This final estimate will depend on which parametric model is selected on which to project. For instance, we use logistic regression in our application in this article, so the final estimates and confidence intervals will have estimates of PC and estimates for the coefficients of each covariate.

The remainder of this article is organized as follows. Section \ref{sec:backgroundapp} provides the background for our application. Section \ref{sec:estimator} shows the derivation of the estimator for the probability of causation. This includes the identification procedure, the projection approach, the proposed influence-function-based estimator, and the estimator's asymptotic properties. The proof that the influence-function-based estimator is asymptotically normal is left in the appendix. Section \ref{sec:simulation} compares the performance of the influence-function-based estimator to the performance of plugin estimators by simulation. Section \ref{sec:application} provides the application with real data from a randomized controlled trial in Kenya. Section \ref{sec:discussion} offers a discussion of our method and our results.


\section{Background}\label{sec:backgroundapp}

Diarrheal diseases are a leading cause of disease and mortality in the developing world, and for children under 5, these diseases account for 20 percent of deaths (\cite{jpal}). These diseases are often transmitted when a water supply is contaminated with fecal matter. In rural Kenya, 43 percent of the population gets their drinking water from nearby springs (as shown in Figure \ref{fig:jpalpics}, left), usually after transporting it in 10- to 20-liter jerry-cans. In springs, where water seeps out of the ground, the water is vulnerable to contamination when people dip their cans to scoop out water and when runoff introduces human or animal waste into the area.

\begin{figure}[ht]
\centering
\subfloat{\includegraphics[width=.35\textwidth]{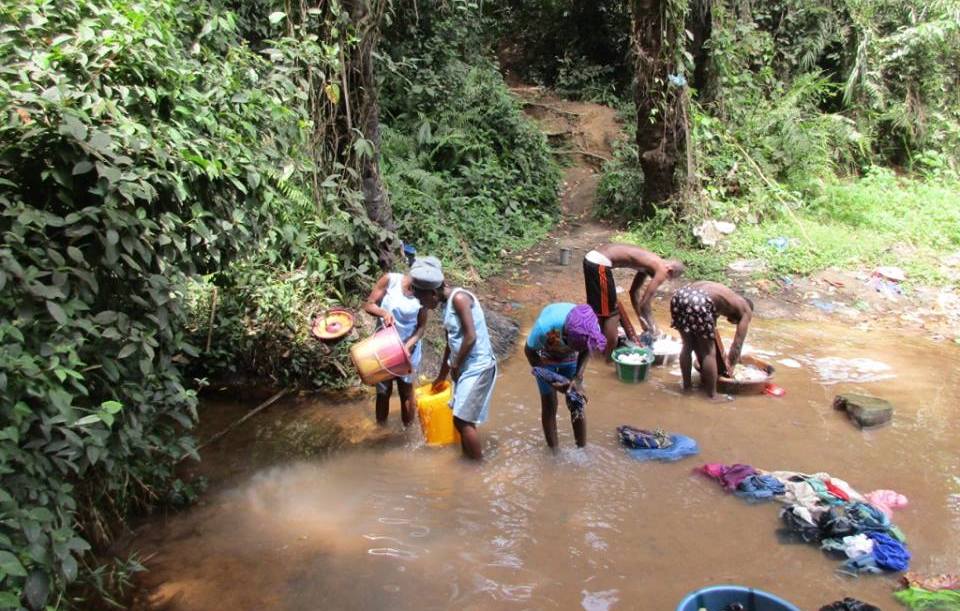}} \qquad
\subfloat{\includegraphics[width=.35\textwidth]{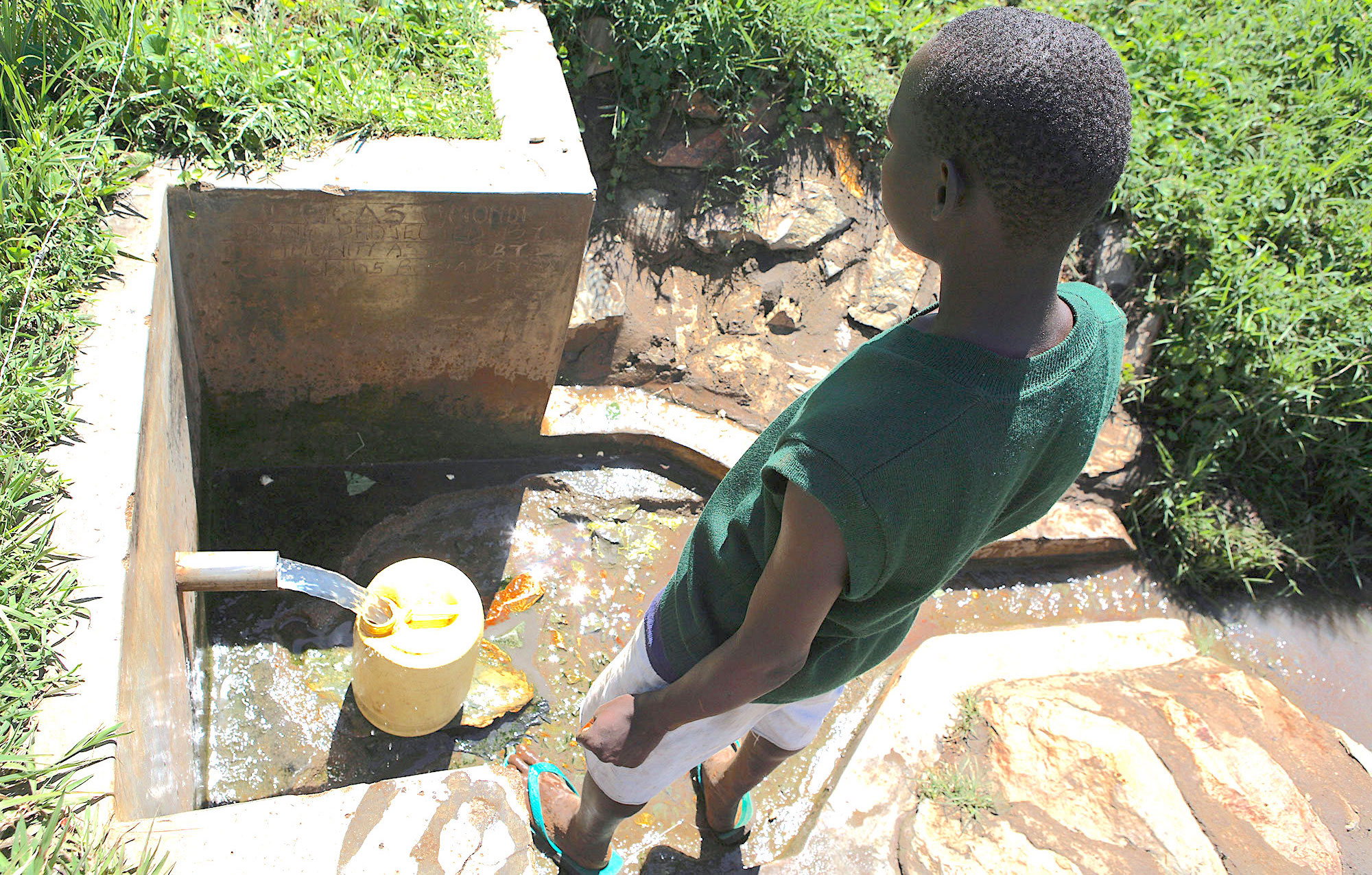}}
\caption{Images from the Busia district of Western Kenya, where 43\% of individuals access water for drinking and cooking from unprotected springs (left). International Child Support, a local NGO, builds concrete structures (right) for residents to access cleaner water. From https://www.povertyactionlab.org/evaluation/cleaning-springs-kenya.}
\label{fig:jpalpics}
\end{figure} 

To prevent the outcome of diarrheal diseases in children in the Busia district of Western Kenya, a local NGO called International Child Support built protective cement structures around a randomly selected group of springs (as shown in Figure \ref{fig:jpalpics}, right). The structures forced water to flow through a pipe rather than seeping from the ground and thus helping prevent the transfer of bacteria. \cite{jpal}, researchers from Abdul Latif Jameel Poverty Action Lab (JPAL), worked with the NGO to deploy the experiment as a randomized controlled trial, and thus JPAL was able to estimat the causal effect of building the cement structures on the bacterial concentration in the water and in diarrheal diseases in the community. 

In the experiment, 400 springs were randomized, which resulted in 13,036 households drinking from protected springs and 32,514 from unprotected springs. Household characteristics such as income, education and health, were approximately equal among the two groups at the start of the program, suggesting that there were no systematic differences between communities that had their springs protected and those that did not. During the trial, the researchers collected measures on the level of water contamination and diarrheal disease in all communities. Water quality was measured at all sample springs and households using protocols based on those used at the United States Environmental Protection Agency (EPA). The water quality measure we use is contamination with E. coli, an indicator bacterium that is correlated with the presence of fecal matter, as measured by the natural log of the most probable number (MPN) of colony-forming bacteria per 100 ml of water. 

The JPAL researchers sought to estimate the average treatment effect. They found that spring protection significantly reduces diarrhea for children under age three (at baseline or born since the baseline survey) by 25\%. They also found that spring protection reduces fecal contamination at the spring by a much higher rate of 66\%. However, some of these beneficial effects on the child's health are lost because household water quality improves less than the quality at the source, due to re-contamination (\cite{jpal}). Interestingly, diarrhea reduction was disproportionately concentrated among girls, suggesting that cleaning water springs could be an effective tool for the improvement of female child survival.

For the purposes of our project, we take the JPAL direct outcome of bacterial concentration as our exposure $A$, and their indirect outcome of diarrheal diseases as our outcome $Y$. We require having a binary exposure, so for us, the exposure is to a higher amount of bacteria due to having no cement structure (denoted by $A=1$), and the control is exposure to a low amount of bacteria due to having a cement structure (denoted by $A=0$). It is left as future research to derive this estimator for continuous exposures. Defining the exposure and outcome in this way allows us to see the experiment as a situation in which there was a harmful exposure (to high concentrations of bacteria) and a negative outcome (diarrheal disease). 

The question that must be answered in order to allocate resources to address the problem effectively for specific sub-populations is, ``How likely is it that the negative outcome was caused by the exposure, and not by something else?'' To reach this estimate, we need an estimator for PC, which we introduce in the next section.


\section{Deriving an estimator for the probability of causation}\label{sec:estimator}

Throughout this paper we assume access to an iid sample $(Z_1,...,Z_n) \sim P$, where $Z=(X,A,Y)$ for $X \in \mathbbm{R}^d$ a vector of covariates, $A \in \{0,1\}$ an indicator of binary exposure, and $Y \in \{0,1\}$ an outcome indicator. We also let $Y^a$ denote the potential outcome that would have been observed under exposure level $A=a$. For notational simplicity we let 
\begin{equation} 
\mu_a(x) = \mathbbm{E}(Y \mid X=x, A=a) \ , \ \pi(x) = P(A=1 \mid X=x) 
\end{equation}
denote the outcome regression and propensity score functions, respectively.

\subsection{Identification}

For identification, we make four causal assumptions, which are similar to the ones made by \cite{dawidmusiofienberg2016} and \cite{tian}: 

\begin{enumerate}
  \item \textbf{Positivity}: $P(A=a \mid X=x) > 0$ for all $x$ with $P(X=x) \neq 0$.
  
  Positivity---the assumption that all individuals have a chance of being exposed---is generally necessary for estimating causal effects. It implies that there cannot be some individuals in the sample that cannot be exposed. 
  
  \item \textbf{Consistency}: $A=a \implies Y^a=Y$.
  
  Consistency helps us translate from potential outcomes to observed outcomes. It can be violated, for example, if there is interference. Since an individual's outcome might be influenced by someone else's exposure, that individual's outcome might not be equal to her potential outcome under that exposure. 

  \item \textbf{No unobserved confounders}: $Y^0 \indep A \mid X$.
  
  No unobserved confounders---sometimes called exchangeability---is the assumption that the treated and the untreated individuals are exchangeable because the treated, had they remained untreated, would have experienced the same average outcome as the untreated did, and vice versa. It holds by design in a randomized trial. Note that here we only require it for the potential outcome $Y^0$, which is a weaker assumption than the usual joint independence of $(Y^0,Y^1) \indep A|X$. See \cite{tian} for identification of the probability of causation (although the authors call it the probability of necessity) if one relaxes this assumption.

  \item \textbf{Monotonicity}: $Y^1 \geq Y^0$.
  
  Monotonicity implies that the outcome under exposure is always higher than the outcome under no exposure. For harmful exposures we can expect monotonicity to hold: We do not expect that exposing an individual to a high dose of radioactivity would remove an individual's radiation sickness; it could only make it worse. For some applications, monotonicity can be a strong assumption. Here, it allows us to obtain a point estimate for PC. However, if the user chooses not to assume monotonicity, the identified parameter shown below in \eqref{eq:idparameter} is simply the lower bound of PC. This is discussed in \cite{dawidfaigmanfienberg2013} and \cite{dawidmusiofienberg2016}.

\end{enumerate}
Whether the assumptions apply to a specific analysis depends on the data. Using these assumptions, PC equals one minus the risk ratio, a quantity we call $\gamma(x)$,
\begin{align}
PC(x) & = P(Y^0 =0 | Y=1, A=1, X=x) \nonumber \\
& = \frac{P(Y^0=0, Y=1 \mid A=1, X=x) }{P(Y=1 \mid A=1, X=x)} \nonumber\\
&= \frac{P(Y - Y^0 = 1 \mid A=1, X=x)}{P(Y=1 \mid A=1, X=x)} \nonumber\\
& = 1 - \frac{\mathbb{E} (Y \mid A=0, X=x)}{\mathbb{E}(Y \mid A=1, X=x)} \nonumber\\
&= 1 - RR(x) \nonumber\\
& \equiv \gamma(x). \label{eq:idparameter}
\end{align}

Where the risk ratio is $RR(x) = \frac{\mathbb{E} (Y \mid A=0, X=x)}{\mathbb{E}(Y \mid A=1, X=x)} = \frac{\mu_0 (x) }{\mu_1 (x)}$. Sometimes the risk ratio is written as $\mu_1(x)/\mu_0(x)$, but we define it as its inverse for convenience. Note that the new method we propose for estimating $\gamma(x)$ can also be used to estimate the risk ratio. Also note that $\gamma(x)$ does not include potential outcomes, and thus it can be estimated with observed data. 

For a continuous treatment, the identification procedure would follow the same logic, but there PC would be a curve in $A$ given by
\begin{equation}
P(Y_0=0 | Y=1, A=a, X=x)=1 - \frac{\mathbb{E}(Y | X, A=0)}{\mathbb{E} (Y | X, A=a)}.
\end{equation}
Estimation could proceed as in \ref{sec:ifs} but the working model would also be indexed by A, and the influence function would change slightly and depend on the conditional density of $A$ given $X$.

\subsection{Motivation for using an influence-function-based projection estimator}

The simplest (and most common) method for estimating RR is with a plugin estimator (see, for example, \cite{epi-stephen}), which works by estimating the nuisance functions via generic regression methods and then ``plugging in'' the estimated regression function into the functional. Thus the plugin estimator of RR is
\begin{equation}
\label{eq:RRplugin}
\widehat{RR}_{\text{PI}}(x) =\frac{\hat{\mu}_0 (x) }{\hat{\mu}_1 (x)}.
\end{equation}

Although the nuisance functions $\mu_0(x), \mu_1(x)$ could be estimated parametrically (for example, assuming a logistic parametric relationship between the outcome and the treatment and covariates), if the parametric assumptions made by the user are incorrect, the parametric plugin estimator will yield biased estimates. This is especially problematic for the probability of causation, which could be used for sensitive applications such as legal trials (\cite{mosteller}). 

Thus, we do \emph{not} want to assume that the true $\gamma(x)$ function follows a known parametric form, i.e. that $\gamma(x)=g(x;\beta)$ is completely known up to some finite-dimensional parameter $\beta \in \mathbbm{R}^d$. Others, such as \cite{linbowang} and \cite{tchetgen2013}, tend to assume that the true $\gamma(x)$ follows a parametric model; however, a disadvantage is that the assumed parametric model will very likely be incorrect. To estimate RR \cite{linbowang} propose the conditional log odds-product as a preferred nuisance model. Their approach is to develop an unconstrained nuisance model that is variation independent of RR (or RD, the risk difference). They propose doubly-robust estimators of models for (monotone transformations of) RR and RD that are consistent and asymptotically normal even when the nuisance model is misspecified, provided that they have correctly specified the model for the propensity score, $\pi = \mathbbm{E}(A \mid X)$. Doubly-robust estimators combine a form of outcome regression with a model for the exposure (i.e., the propensity score) to estimate the causal effect of an exposure on an outcome. However, the approach in \cite{linbowang} still relies on parametric assumptions, both for the nuisance functions and for the RR itself.

Using a nonparametric plugin estimator (e.g., using random forests) does not allow for the estimation of valid confidence intervals in general without making stronger assumptions. Thus, we use influence functions, which will allow us to derive valid confidence intervals without making parametric assumptions and by making weak structural assumptions.

In order to estimate $\hat{\gamma}(x)$ nonparametrically and still obtain valid confidence intervals under weak structural assumptions, we derive an influence-function-based estimator for $\gamma(x)=1-RR(x)$. However, in a nonparametric model, $\gamma(x)$ does not have an influence function because it is not pathwise differentiable (\cite{bickel}); thus it is similar to a density or regression function, which cannot be estimated at $\sqrt{n}$ rates in a nonparametric model.

We use a weighted least squares projection of the true function $\gamma(x)$ onto a parametric model $g(x;\beta)$. By doing this, we can derive an influence-function-based estimator for the projection, which will allow us to obtain valid confidence intervals for the projection with nonparametric estimation of the nuisance functions, under weak structural assumptions. To clarify, we target a projection of $\gamma(x)$ onto $g(x;\beta)$ (rather than assuming $\gamma(x)=g(x;\beta)$) because we want to be honest about the fact that we do not necessarily have a correct parametric model for $\gamma(x)$. In this article, we are using a parametric model $g(x;\beta)$ simply as a tool that is useful for summarizing the data, which is generated according to some true possibly complex form for $\gamma(x)$. A reader who is more comfortable with assuming that $\gamma(x)=g(x;\beta)$ might think it would be easier to interpret what the estimates of PC and the coefficients meant if one assumed the data followed a parametric model. But what if this model is incorrect? Then assuming the data follow a parametric model yields estimates that are very difficult to interpret, whereas the projection will yield estimates that are correct. 

Our proposed estimator, as we will show in detail below, is the value $\hat{\beta}$ that satisfies $\hat{\Psi}(\hat{\beta})=0$, where $\hat{\Psi}(\hat{\beta}) = \mathbbm{P}_n (\varphi (Y,A,X))$, $\mathbbm{P}_n$ is a sample average and $\varphi(Y,A,X)$ is the influence function of the projection, which is a function of the nuisance parameters. This estimator is implicitly defined because it is the value of $\hat{\beta}$ that sets $\hat{\Psi}$ equal to zero. Thanks to the properties of influence functions, the estimator will yield valid confidence intervals under weak structural assumptions despite estimating all nuisance functions with nonparametric methods.

Influence functions were originally introduced articles about robustness (\cite{hampel, huber}). To first order, $\varphi$ is the influence of the $i$-th observation on an estimator $\hat\psi$. Their importance in diverse semi- and nonparametric functional estimation problems was established starting in the 1990s (\cite{bickel, van2003unified, vandervaart, tsiatis, robins2008}). Influence functions have also played a fundamental role in causal inference recently, as well as in economics (\cite{chernozhukov}) and machine learning (\cite{kandasamy}). See \citet{robins2000, ogburn, toth, kennedy2016} for just a few interesting examples.

\subsection{Projection}

We target a projection of the true $\gamma(x)$ function onto a parametric model that does not make any assumptions about the data distribution. In other words we redefine our parameter of interest as the best-fitting approximation of the true function $\gamma(x)$ by a parametric model $g(x;\beta)$, where $\beta$ is a vector of real parameters. For example one could use a best fitting linear approximation with $g(x;\beta) = \beta^T x$, based on least squares error; more details and examples are given shortly. Then, we estimate the parameters of the projection with an influence-function-based (IFB) approach. The IFB approach requires estimation of nuisance functions, which can be accomplished with parametric or nonparametric methods. We propose using nonparametric methods, the more flexible option; it turns out we are nonetheless still able to attain fast parametric rates for estimating $\beta$.

We define the projection as follows. Let
\begin{equation}
\label{eq:projection}
\beta = \argmin_{\beta^*} \mathbbm{E} \left[ w(X) (\gamma(X) - g(X;\beta^*))^2) \right].
\end{equation}
Note $\beta$ is a function of the distribution $P$, the true distribution from which our samples are drawn, since it minimizes the $L_2$ error between the model and the true $\gamma(x)$ function. We use an $L_2$ loss-based projection because of its simplicity, but other loss functions could be used instead. The function $w(X)$ is a user-specified weight function, and can be defined based on subject matter concerns; for example, if certain parts of the covariate space are more important to approximate well, this can be incorporated into the weight. If all are equally important, one could simply choose a uniform weight $w(x)=1$. In the case that the model $g(x;\beta)$ is correct, different choices of the weight give estimators of different efficiencies; otherwise the weight simply defines the projection. One could also consider a data-dependent projection, where the weight equals a kernel function that collapses to a point $X=x$ as sample size grows; this is similar to the approach taken in \cite{kennedy_ctstreatment}.

Under standard regularity conditions (\cite{tsiatis}), the $\beta$ defined in \eqref{eq:projection} satisfies the moment condition
\begin{equation}
\label{eq:projection2}
\mathbbm{E} \left[ \frac{\partial g(X;\beta)}{\partial \beta} w(X) (\gamma(X) - g(X;\beta))) \right] : = \Psi ( \beta;P) = 0.
\end{equation}
We write $\Psi(\beta;P) = \Psi (\beta) = 0$ for short. This moment condition implicitly defines our parameter $\beta$, i.e., the coefficients in our best approximation to $\gamma(x)$ based on the model $g(x;\beta$). For this article, we assume the solution $\beta$ is unique.

We define our estimator $\hat\beta$ as the solution to 
\begin{equation}
\label{eq:estimator}
\hat\Psi(\hat\beta) = 0
\end{equation}
for an estimated version of the moment condition, to be defined shortly. Therefore the problem reduces to estimating the moment condition $\Psi(\beta)$ at a given fixed $\beta$ value. This differs from classical M-estimation problems (\cite{vandervaart}) in that the moment condition is a complex functional, more complicated than a simple expectation since it involves unknown complex nuisance functions $\mu_a,\pi$.

A simple way to estimate $\Psi (\beta)$ would be with a plugin estimator
\begin{equation}
\widehat\Psi (\beta) = \mathbbm{P}_n \left[ \frac{\partial g(X;\beta)}{\partial \beta} w(X) (\widehat\gamma(X) - g(X;\beta))) \right],
\end{equation}
(where $\mathbbm{P}_n$ is a sample average), that replaces the expectation with an empirical sample average, and (more problematically) replaces $\gamma(x)$ with its estimated version $1-\widehat\mu_0(x)/\widehat\mu_1(x)$. But this has the usual problems of a plugin (i.e. slower convergence rates if $\gamma$ is estimated nonparametrically, and unreliable parametric assumptions otherwise). 

In contrast, we derive an influence function-based estimator of $\Psi(\beta)$, which allows for nonparametric estimation of the complex nuisance functions, while still yielding $\sqrt{n}$-rates, asymptotic normality, and valid confidence intervals both for the moment condition itself and for the value $\widehat\beta$ that sets it to zero.

\subsection{Proposed estimator}\label{sec:ifs}

Here we give the influence function for the moment condition parameter $\Psi(\beta^*)$ defined in \eqref{eq:projection}, which has two crucial consequences: first its variance provides an asymptotic efficiency bound, indicating that no regular asymptotically linear estimator can have smaller MSE without adding assumptions, and second it tells us how to construct efficient estimators that attain such a bound under weak nonparametric conditions. 

One can speak of an influence function for an estimator or for a parameter. When we say an estimator has a particular influence function, this means the estimator is asymptotically equivalent to a sample average of the influence function, i.e., 
\begin{equation}
\label{eq:simpleDef}
\hat\psi - \psi = \mathbbm{P}_n ( \varphi(Z) ) + o_p(1/\sqrt{n}),
\end{equation}
where the influence function $\varphi(Z)$ has mean zero and finite variance $\mathbbm{E}(\varphi(Z)\varphi(Z)^T)$, and $\mathbbm{P}_n$ denotes a sample average. Note the above formulation immediately allows for constructing Wald-style confidence intervals based on the central limit theorem. When we say a parameter has a particular influence function, this indicates that that function acts as a pathwise derivative of the parameter, where paths refer to parametric submodels (\cite{vandervaart}). The influence function behaves as the derivative term in a von Mises-style distributional Taylor expansion of the parameter. In practice, influence functions are crucial because their variance gives an efficiency benchmark, and because they can be used to construct estimators that are nonparametrically efficient with nice properties such as double-robustness. In particular a standard way to construct such estimators is to solve an estimating equation based on the (estimated) influence function.

Our first result gives the efficient influence function of the moment condition $\Psi(\beta^*)$, which we then use to construct an estimator $\hat\beta$. A proof of this result is given in the Appendix.

\begin{theorem} \label{thm:eif}
 Under a nonparametric model, the (uncentered) efficient influence function for the moment condition $\Psi(\beta^*)$ at any fixed $\beta^*$ is given by
\small
\begin{equation}
\label{eq:myIF}
\varphi(Z;\beta,\eta) = h({X};\beta^*) \left[ \frac{1}{\mu_1({X})} \left( \frac{\mu_0({X})}{\mu_1({X})} \frac{A(Y-\mu_1({X}))}{\pi({X}) }
- \frac{(1-A)(Y-\mu_0({X}))}{(1-\pi({X}))} \right) + \gamma(X) - g(X;\beta^*) \right] , 
\end{equation}
\normalsize
where we define $\eta=(\pi,\mu_0,\mu_1)$ and $h(x;\beta) = \frac{\partial g(x;\beta)}{\partial \beta} w(x)$.
\end{theorem}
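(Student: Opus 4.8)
The plan is to obtain the stated influence function by computing the pathwise derivative of the functional $\Psi(\beta^*;P)$ along a generic one-dimensional parametric submodel and then reading off the efficient influence function from the resulting inner-product representation. Because the model is fully nonparametric, the tangent space is the entire space of mean-zero, finite-variance functions of $Z$, so any valid gradient is automatically the canonical gradient; it therefore suffices to produce a mean-zero function $\varphi_c(Z)$ with $\frac{d}{d\epsilon}\Psi(\beta^*;P_\epsilon)\big|_{\epsilon=0} = \mathbbm{E}[\varphi_c(Z) s(Z)]$ for every submodel score $s$, after which I add back $\Psi(\beta^*)$ to recover the uncentered form in the statement.

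First I would fix $\beta^*$ and observe that $h(X;\beta^*)$ and $g(X;\beta^*)$ are then fixed, $P$-independent functions of $X$, so the only dependence of $\Psi(\beta^*;P) = \mathbbm{E}_P[h(X;\beta^*)(\gamma(X) - g(X;\beta^*))]$ on $P$ enters through the marginal law of $X$ and through $\gamma(X) = 1 - \mu_0(X)/\mu_1(X)$. Taking a submodel $\{P_\epsilon\}$ with $P_0 = P$ and score $s(Z)$, I would split $\frac{d}{d\epsilon}\Psi\big|_0$ into two contributions. The derivative acting on the marginal density of $X$ yields $\mathbbm{E}[h(X;\beta^*)(\gamma(X) - g(X;\beta^*)) s(Z)]$, which places the term $h(X;\beta^*)(\gamma(X) - g(X;\beta^*))$ into the (uncentered) influence function and matches the last bracketed piece of \eqref{eq:myIF}.

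The second contribution is the derivative acting on $\gamma$ through the two regressions, and this is where the real work lies. The key building block is the pathwise derivative of a conditional mean, $\dot{\mu}_a(x) := \frac{\partial}{\partial\epsilon}\mu_{a,\epsilon}(x)\big|_{\epsilon=0} = \mathbbm{E}\left[\frac{\mathbbm{1}(A=a)}{P(A=a\mid X)}(Y-\mu_a(X)) s(Z) \,\middle|\, X=x\right]$, the standard inverse-probability-weighted residual representation. Applying the chain rule to $\gamma = 1 - \mu_0/\mu_1$ gives $\frac{\partial}{\partial\epsilon}\gamma_\epsilon\big|_0 = -\dot{\mu}_0/\mu_1 + (\mu_0/\mu_1^2)\dot{\mu}_1$; substituting the two IPW representations, multiplying by $h(X;\beta^*)$, and collapsing the nested conditional expectation by iterated expectations produces exactly $\mathbbm{E}\left[h(X;\beta^*)\frac{1}{\mu_1(X)}\left(\frac{\mu_0(X)}{\mu_1(X)}\frac{A(Y-\mu_1(X))}{\pi(X)} - \frac{(1-A)(Y-\mu_0(X))}{1-\pi(X)}\right) s(Z)\right]$. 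Summing the two contributions gives the centered gradient; adding $\Psi(\beta^*)$ recovers the uncentered $\varphi$ of \eqref{eq:myIF}, and a short check confirms $\mathbbm{E}[\varphi] = \Psi(\beta^*)$ since each augmentation piece has conditional mean zero given $X$.

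I expect the main obstacle to be the correct handling of the \emph{nonlinear} ratio $\gamma = 1 - \mu_0/\mu_1$: one must combine the two regression derivatives with the right signs and the $1/\mu_1$ and $\mu_0/\mu_1^2$ factors, and then carefully reduce the nested conditional expectations to a single expectation against $s(Z)$, which is precisely the bookkeeping where sign or weighting errors would creep in. (An equivalent route is to treat $\Psi$ as a linear functional of the two separate augmented-IPW building blocks for $\mathbbm{E}[h\,\mu_a]$ and combine them by the functional delta method, but the ratio still forces the same chain-rule step.) Throughout I would invoke the standing regularity conditions---positivity, so that $\pi$ is bounded away from $0$ and $1$; $\mu_1$ bounded away from $0$, so that the ratio and its derivative are well defined; and enough smoothness to interchange differentiation and integration---and I would close by noting that, since the model is nonparametric, the gradient constructed this way is unique and hence efficient.
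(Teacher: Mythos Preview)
Your proposal is correct and follows essentially the same route as the paper: decompose the pathwise derivative of $\Psi(\beta^*)$ into the piece from the marginal law of $X$ (yielding $h(X;\beta^*)(\gamma(X)-g(X;\beta^*))$) and the piece from $\gamma$, handle the latter via the quotient/chain rule on $1-\mu_0/\mu_1$ using the standard IPW-residual representation of a conditional-mean derivative, and then appeal to nonparametric uniqueness for efficiency. The only presentational difference is that the paper carries out the calculation heuristically in the discrete case (treating influence functions as ordinary derivatives and invoking Lemma~1 for $\mathbbm{E}(Y\mid X=x)$), and only remarks at the end that the formal verification proceeds exactly via the parametric-submodel pathwise derivative you compute directly.
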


Now we can solve an estimating equation based on the above influence function to construct an appropriate estimator with desirable efficiency and robustness properties. Since the influence function is linear in the parameter $\Psi(\beta^*)$, this is equivalent to computing the sample average of an estimate of \eqref{eq:myIF}. Solving for the $\hat\beta$ that sets the estimated moment condition equal to zero consists of solving the estimating equation.

Therefore our proposed estimator of the minimizer $\beta$ in \eqref{eq:projection} is defined as the assumed unique value $\hat\beta$ that satisfies
\begin{equation}
\hat\Psi(\hat\beta) = 0 
\end{equation}
where  
\begin{align}
\hat\Psi(\beta) = & \mathbbm{P}_n \bigg[ h({X};\beta) \bigg\{ \frac{\hat\mu_0({X})}{\hat\mu_1({X})^2} \frac{A}{\hat\pi(X)} \bigg(Y-\hat\mu_1({X})\bigg)
- \frac{1}{\hat\mu_1({X})} \left( \frac{1-A}{1-\hat\pi({X})}\right) \bigg(Y-\hat\mu_0({X})\bigg) \nonumber \\ 
& + \bigg(\gamma(X) - g(X;\beta) \bigg) \bigg\} \bigg],
\end{align} 
where $\hat\gamma(X)=1-\hat\mu_0(X)/\hat\mu_1(X)$. Of course from this one can also construct estimates $g(x;\hat\beta)$ of predicted values based on the best-fitting approximation. In the next section we analyze the asymptotic properties of our proposed estimator.

\subsection{Asymptotic properties}

Next, we analyze the asymptotic behavior of our influence-function-based (IFB) estimators $\hat\beta$ and $g(x;\hat\beta)$. We show that they can be $\sqrt{n}$-consistent and asymptotically normal even when the nuisance functions $(\pi,\mu_a)$ are estimated data-adaptively with flexible nonparametric regression tools. The proof of the result is similar to the proof of Theorem 5.31 from \citet{vandervaart}, which analyzes general M-estimators in the presence of complex nuisance functions. Throughout, for a possibly data-dependent function $f(z)$ we let $\| f \|^2 = \int f(z)^2 \ dP(z)$ denote the squared $L_2(P)$ norm.

\begin{theorem} \label{thm:largesample}
Assume that:
\begin{enumerate}
\item The sequence of functions $\hat\varphi_n = \varphi(\cdot;\hat\beta,\hat\eta)$ and its limit $\varphi=\varphi(\cdot;\beta,\eta)$ are contained in a Donsker class with $||\hat\varphi_n - \varphi||=o_\mathbbm{P}(1)$.
\item The map $\beta \rightarrow \mathbbm{E}\{ \varphi(Z;\beta,\eta) \}$ is differentiable at the true $\beta$ uniformly in $\eta$, with invertible derivative matrix $M$ (evaluated at the true $\beta$ and $\eta$).
\end{enumerate}
Then as long as $(\hat\beta,\hat\eta) \inprob (\beta,\eta)$ the proposed estimator is consistent with rate of convergence
$$ | \hat\beta - \beta | = O_\mathbbm{P}\left\{ \frac{1}{\sqrt{n}} + \Big( \| \hat\pi - \pi \| + \| \hat\mu_1 - \mu_1 \| \Big) \sum_{a=0}^1 \| \hat\mu_a - \mu_a \| \right\} . $$
Suppose further that:
\begin{enumerate}
\item[3.] $\Big( \| \hat\pi - \pi \| + \| \hat\mu_1 - \mu_1 \| \Big) \sum_a \| \hat\mu_a - \mu_a \|=o_\mathbbm{P}(1/\sqrt{n})$.
\end{enumerate}
Then the proposed estimator attains the nonparametric efficiency bound, and is asymptotically normal with
$$ \sqrt{n} (\hat\beta - \beta) \indist N \Big(0, M^{-1} \mathbbm{E} ( \varphi \varphi^T) (M^T)^{-1} \Big), $$
and similarly for any fixed $x$ we have
\begin{equation}
\label{eq:sandwichvariance}
\sqrt{n} \left( g(x; \hat\beta) - g(x; \beta) \right) \indist 
N \left( 0, \left( \frac{\partial g(x;\beta)}{\partial \beta} \right)^T
M^{-1}
\mathbbm{E} \left( \varphi \varphi^T \right) 
(M^T)^{-1}
\left( \frac{\partial g(x;\beta)}{\partial \beta} \right) \right),
\end{equation}
\end{theorem}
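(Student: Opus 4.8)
The plan is to treat $\hat\beta$ as a $Z$-estimator whose estimating function $\hat\Psi(\beta) = \mathbbm{P}_n[\varphi(\cdot;\beta,\hat\eta)]$ carries the estimated nuisance $\hat\eta = (\hat\pi,\hat\mu_0,\hat\mu_1)$, and to follow the argument behind Theorem 5.31 of \citet{vandervaart}. Writing $\mathbb{G}_n = \sqrt n(\mathbbm{P}_n - P)$ for the empirical process, the backbone is the decomposition, evaluated at the true $\beta$,
\[
\sqrt n\,\hat\Psi(\beta) = \mathbb{G}_n\big[\varphi(\cdot;\beta,\eta)\big] + \mathbb{G}_n\big[\varphi(\cdot;\beta,\hat\eta) - \varphi(\cdot;\beta,\eta)\big] + \sqrt n\,P\big[\varphi(\cdot;\beta,\hat\eta)\big].
\]
Since $\varphi$ in \eqref{eq:myIF} has conditional mean $h(X;\beta)(\gamma(X)-g(X;\beta))$ at the true nuisances, $P[\varphi(\cdot;\beta,\eta)] = \Psi(\beta) = 0$ at the true $\beta$; so the first term is a centered sample average of a fixed square-integrable function and, by the CLT, converges to $N(0,\mathbbm{E}(\varphi\varphi^T))$. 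I would then show the other two terms are negligible and invert.

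For the empirical-process term, Assumption 1 is exactly what is required: because $\hat\varphi_n$ and $\varphi$ lie in a fixed Donsker class with $\|\hat\varphi_n - \varphi\| = o_\mathbbm{P}(1)$, asymptotic equicontinuity of Donsker classes gives $\mathbb{G}_n[\hat\varphi_n - \varphi] = o_\mathbbm{P}(1)$, which at the true $\beta$ kills the second term. (Sample splitting would be an alternative route that removes the Donsker hypothesis, but here that assumption does the work directly.)

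The third term, the conditional bias $\sqrt n\,P[\varphi(\cdot;\beta,\hat\eta)]$, is the crux and the main obstacle. I would iterate expectations over $(A,Y)$ given $X$ with $\hat\eta$ held fixed, using $\mathbbm{E}[A(Y-\hat\mu_1)/\hat\pi \mid X] = (\pi/\hat\pi)(\mu_1-\hat\mu_1)$ and its control-arm analogue, and then collect terms against the identity $\mathbbm{E}[\varphi(\cdot;\beta,\eta)\mid X]/h = \gamma - g$. The payoff is that the efficient influence function is Neyman-orthogonal (doubly robust) in $\eta$: when one expands in $\hat\pi - \pi$, $\hat\mu_1 - \mu_1$, and $\hat\mu_0 - \mu_0$, every first-order term cancels. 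Because $\gamma$ enters through the ratio $\mu_0/\mu_1$ the bookkeeping is heavier than for a standard ATE, but the leftover is a sum of mixed products---$(\hat\pi-\pi)(\hat\mu_1-\mu_1)$, $(\hat\pi-\pi)(\hat\mu_0-\mu_0)$, $(\hat\mu_1-\mu_1)^2$, and $(\hat\mu_0-\mu_0)(\hat\mu_1-\mu_1)$---divided by nuisance quantities bounded away from zero (using positivity and $\mu_1$ bounded below). Cauchy--Schwarz on each product then bounds $|P[\varphi(\cdot;\beta,\hat\eta)]|$ by $(\|\hat\pi-\pi\| + \|\hat\mu_1-\mu_1\|)\sum_a\|\hat\mu_a-\mu_a\|$, the second-order remainder in the stated rate. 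Confirming that the ratio structure leaves only these second-order terms, with no surviving linear piece, is the delicate part of the proof.

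Finally I would invert. Combining the three pieces gives $\hat\Psi(\beta) = \mathbbm{P}_n[\varphi(\cdot;\beta,\eta)] + O_\mathbbm{P}(\text{second-order remainder}) + o_\mathbbm{P}(n^{-1/2})$, and with the differentiability and invertibility of $M$ from Assumption 2 the $Z$-estimator master theorem yields $|\hat\beta-\beta| = O_\mathbbm{P}\{n^{-1/2} + (\|\hat\pi-\pi\|+\|\hat\mu_1-\mu_1\|)\sum_a\|\hat\mu_a-\mu_a\|\}$, the first display. Under Assumption 3 the remainder is $o_\mathbbm{P}(n^{-1/2})$, so the linearization $\sqrt n(\hat\beta-\beta) = -M^{-1}\mathbb{G}_n[\varphi(\cdot;\beta,\eta)] + o_\mathbbm{P}(1)$ holds; the CLT and Slutsky give $\sqrt n(\hat\beta-\beta) \indist N(0, M^{-1}\mathbbm{E}(\varphi\varphi^T)(M^T)^{-1})$, with efficiency inherited from $\varphi$ being the EIF. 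The statement for $g(x;\hat\beta)$ is then the delta method applied with gradient $\partial g(x;\beta)/\partial\beta$, producing the sandwich variance in \eqref{eq:sandwichvariance}.
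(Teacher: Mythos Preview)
Your proposal is correct and follows essentially the same route as the paper: a von Mises/$Z$-estimator decomposition into a CLT term, a Donsker empirical-process term, a second-order nuisance bias (handled by iterating expectations and Cauchy--Schwarz to obtain exactly the mixed-product remainder you list), and a Taylor linearization in $\beta$, with the delta method giving \eqref{eq:sandwichvariance}. The only organizational difference is that the paper evaluates the decomposition at $\hat\beta$ and carries the Taylor expansion as an explicit fourth term, whereas you evaluate at the true $\beta$ and offload the inversion to the $Z$-estimator master theorem; these are standard rearrangements of the same argument.
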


Condition 1 of Theorem \ref{thm:largesample} requires that the influence function not be too complex (i.e., it must lie in a Donsker class). This is therefore an implicit complexity restriction on the nuisance functions $(\pi,\mu_0,\mu_1)$ as well. Donsker classes are described in detail in \citet{vandervaart} and \citet{kennedy2016}, but include for example smooth functions with bounded partial derivatives. However, the Donsker part of Condition 1 can be avoided entirely  with sample splitting, i.e., estimating the nuisance functions on one part of the sample and evaluating the influence function on a separate independent part. This was recently termed cross-fitting by \citet{chernozhukov}. We use this sample-splitting procedure in our simulations and application. Condition 2 is a weak differentiability condition that essentially just requires a smooth enough projection model $g(x;\beta)$.

Theorem \ref{thm:largesample} thus tells us that the rate of convergence of our proposed estimator $\hat\beta$ only has a weak second-order dependence on the errors of the nuisance estimators. In particular, this means $\hat\beta$ will be $\sqrt{n}$-consistent and asymptotically normal even if the nuisance estimators only converge at $n^{-1/4}$ rates. These rates can be attained in nonparametric models under sufficient smoothness, sparsity, or structural (e.g., generalized additive model) conditions.

 Let $\sigma^2(x)$ be shorthand for the asymptotic variance of $g(x, \hat\beta)$ given in Theorem \ref{thm:largesample}, so that under the given conditions we have
\begin{equation}
\label{eq:sandwichvariance1}
\sqrt{n} \left( g(x; \hat\beta) - g(x; \beta) \right) \indist N ( 0, \sigma^2(x)).
\end{equation}
Then a simple Wald-style 95\% confidence interval for $g(x;\hat\beta)$ is given by
\begin{equation}
\label{eq:confidenceintervals}
\left[ g(x;\beta) - 1.96 \frac{\hat\sigma(x)}{\sqrt{n}} , g(x;\beta) + 1.96 \frac{\hat\sigma(x)}{\sqrt{n}}  \right],
\end{equation}
where $\hat\sigma^2(x)$ is any consistent estimator of the asymptotic variance, e.g.,
$$ \hat\sigma^2(x) = \left( \frac{\partial g(x;\hat\beta)}{\partial \beta} \right)^T
\hat{M}^{-1}
\mathbbm{P}_n \left( \hat\varphi \hat\varphi^T \right) 
(\hat{M}^T)^{-1}
\left( \frac{\partial g(x; \hat\beta)}{\partial \beta} \right) $$
where $\hat{M} = \mathbbm{P}_n \Big(\frac{\partial \hat\varphi}{\partial \beta}\Big)$ is an estimate of the derivative matrix.

Importantly, if we had not used an influence function-based estimator for $\beta$, the corresponding rates of convergence would not in general involve products of the nuisance errors, so one would expect slower-than-$\sqrt{n}$ rates for estimating $\beta$ and no available confidence intervals.

\subsection{Model selection}

The conclusions of this estimation method will depend on the choice of parametric model $g(x;\beta)$. The user can select a model $g(x;\beta)$ by drawing on substance matter knowledge, or if the user prefers to follow a data-driven approach, he or she could use model selection by using cross-validation. In this section we describe the approach first proposed by \cite{laandudoit2003} and later by \cite{kennedylorchsmall2017} for model selection in the presence of nuisance parameters. 

Using cross-validation in this setting is slightly different than usual because the risk parameter depends on nuisance functions. Following a process similar to the one in  \cite{kennedylorchsmall2017}, we treat the risk as a parameter and derive its efficient influence function.

To rank a set of candidate estimators $\{\hat{g}_k: k \in \mathcal{K}\}$, we define the pseudo-risk for a candidate model $g_k(x;\beta)$ as
\begin{equation}
\label{eq:pseudo-risk}
    R(\hat{g}_k(x;\beta)) = \int_\mathcal{X} w(x) \{ \hat{g}_k(x;\beta)^2 - 2 \gamma(x) \hat{g}_k(x;\beta)\} dP(x),
\end{equation}
which is a shifted version of the mean squared error $R^*(\hat{g}_k(x;\beta)) = \int_\mathcal{X} w(x) \{ \gamma (x) - \hat{g}_k(x;\beta)\}^2 dP(x)$. The pseudo-risk will yield the same rankings of models as the true risk, since the pseudo-risk is a shifted version of the true risk, where the shift does not depend on the estimator $\hat{g}_k(x;\beta)$. The efficient influence function for the risk $R({g})$ of a given fixed candidate $g(x;\beta)$ is
\begin{align} \label{eq:riskif}
\varphi_R (X) = & w(X) \bigg\{ 2g(X;\beta) \bigg( \frac{\mu_0(X)}{\mu_1(X)} \frac{A(Y-\mu_1(X))}{\pi(X)} - \frac{1}{\mu_1(X)} \frac{1-A}{1-\pi(X)} (Y-\mu_0(X)) \bigg) \nonumber \\
& + g(X;\beta)^2 - 2 \gamma(X) g(X;\beta) \bigg\} - R(\hat{g}_k(X;\beta)).
\end{align}
\normalsize

A proof of this follows similar logic as the proof for Theorem 1. One can implement an influence function-based estimator of the pseudo-risk using \eqref{eq:riskif} and the same approach as in \ref{sec:ifs}. The resulting pseudo-risk estimates (over any given class of models) can then be used to select the best model, functional form, and covariates. One could also consider different risk measures, other than mean squared error, but this is beyond the scope of this article.

\subsection{Example: Projecting onto a logistic regression}

When using the proposed estimator for a dataset, suppose we select $g(x;\beta)=$logit($p$). The reader might find it puzzling that we now discuss PC as a logistic model, but recall that it is only the model onto which we are projecting the true function. The consequence of selecting a logit is that we obtain estimates of the outcome, which correspond to PC, and estimated coefficients, just like one does when performing logistic regression, which correspond to the relationship between the covariates and PC.

Specifically, how should these results be interpreted? A logistic regression of $Y$ on $X_1, \dots, X_k$ estimates parameter values for $\beta_0, \dots, \beta_k$ via maximum likelihood estimation of the following equation:
\begin{equation}
\label{eq:parammodel}
\text{logit}(p) = \log \left(\frac{p}{1-p}\right) = \beta_0 + \beta_1 X_1 + \dots + \beta_k X_k.
\end{equation}
In our example, $p$ is the probability of causation.

For a binary covariate $X_1$, the coefficient $\beta_1$ is the log-odds ratio between the group where $X_1=0$ and the group where $X_1=1$. As is usual, to translate to odds one can just exponentiate the log-odds. In our case, the odds refers to the ``odds of causation.'' The odds of causation are
\begin{equation}\label{eq:oddsofcausation}
\text{Odds of causation} = \frac{P( Y^0=0 \mid Y=1, A=1, X=x )}{P( Y^0=1 \mid Y=1, A=1, X=x )}.
\end{equation}

If the odds of causation are equal to three, for example, then we can say that it is three times more likely that the outcome $Y$ was caused by exposure $A$ than not. 
The odds ratio corresponding to covariate $X_1$, for example, is
\begin{equation}
\text{Odds ratio} = \frac{\text{Odds}( Y^0=0 \mid Y=1, A=1, X_1=1, X_2, \dots, X_n )}{\text{Odds}( Y^0=0 \mid Y=1, A=1, X_1=0, X_2, \dots, X_n )}.
\end{equation}
If the odds ratio is equal to 0.17, then, for a female $X_1=1$ the odds that the outcome $Y$ was caused by the exposure $A$ are 0.17 times greater than for a male, $X_1=0$.


\section{Simulation: IFB-estimator vs. plugin estimator} \label{sec:simulation}

In this section, we compare the finite sample performance of our nonparametric influence function approach to that of the currently used plugin estimators, both the parametric and nonparametric ones, by simulation\footnote{\textbf{Note}: The code for the simulation can be found online at \url{https://github.com/mariacuellar/probabilityofcausation/blob/master/Simulation.R}.}. We generate our covariates by following the example from \cite{kangschafer}. We define a vector of four covariates $X = (X_1, X_2, X_3, X_4)$ as normal random variables with mean 0 and variance 1. To see how the estimators performed in a more complicated setting that did not follow simple normal distributions, we used the transformations in \cite{kangschafer} to define a new vector ${X}^*$ of transformed covariates,
\begin{align}
X_1^* &= \text{exp}(X_1/2) \nonumber \\
X_2^* &= X_2/(1 + (X_1)) + 10 \nonumber \\
X_3^* &= (X_1*X_3/25 + 0.6)^3 \nonumber \\
X_4^* &= (X_2 + X_4 + 20)^2.
\end{align}
\cite{kangschafer} originally defined these transformations for demonstrating that parametric assumptions led to biased results, despite having diagnostics plots that seemed to point to no assumption violations. The $X^*$ are simply a useful set of transformations of normal $X$ for our simulation. 

We let $ \gamma(x) =P(Y^0 = 0 \mid Y^1 = 1, A=1, X=x) = \frac{\mu_1(x) - \mu_0(x)}{\mu_1(x)}, $
where $\mu_a(x) = P(Y=1 \mid X=x, A=a)$. Consider the  factorization
\begin{equation}
\begin{gathered}
P(Y^1 = 1) = \beta \\
P(X=x \mid Y^1=y) = f(x \mid y) \\
P(A=1 \mid X=x, Y^1=y) = \pi(x) \\
 P(Y^0 = 1 \mid Y^1 = 1, A=a, X=x) = 1 - \gamma(x) \\
 P(Y^0 = 1 \mid Y^1 = 0, A=a, X=x) = 0,
\end{gathered}
\end{equation}
where we have
\begin{equation}
\begin{gathered}
P(A=1 \mid X=x, Y^1=y) = \pi(x) \\
P(Y^0 = 1 \mid Y^1 = 1, A=0, X=x) = P(Y^0 = 1 \mid Y^1 = 1, A=1, X=x) \\
P(Y^0 = 1 \mid Y^1 = 0, A=0, X=x) = P(Y^0 = 1 \mid Y^1 = 0, A=1, X=x),
\end{gathered}
\end{equation}
by no unobserved confounders, i.e., exchangeability, $(Y^0,Y^1) \indep A \mid X$, and we have $ P(Y^0 = 1 \mid Y^1 = 0, A=a, X=x) = 0 $ by monotonicity $Y^0 \leq Y^1$. Then,
\begin{align}
\mu_1(x) &= P(Y=1 \mid X=x, A=1) = P(Y^1 \mid X=x) \\
&= \frac{ P(X=x \mid Y^1 = 1) P(Y^1=1) }{ \sum_y P(X=x \mid Y^1 = y) P(Y^1=y) } \\
&= \frac{ \beta f(x \mid 1) }{ \beta f(x \mid 1) + (1-\beta) f(x \mid 0) }
\end{align}
where the first equality follows by exchangeability and consistency, the second by Bayes rule, and the third by definition according to our factorization. Similarly using the fact that $\gamma = 1 - (\mu_0/\mu_1)$, 
\begin{align}
\mu_0(x) &= P(Y=1 \mid X=x, A=0) = (1-\gamma(x)) \mu_1(x) = \frac{ (1-\gamma(x)) \beta f(x \mid 1) }{ \beta f(x \mid 1) + (1-\beta) f(x \mid 0) } .
\end{align}
Now, for example, we take $\gamma(x) = \text{expit}(\psi^T X)$, so $\gamma$ is a logistic regression, and $f(x \mid y) = f(x)$ (so $X \indep Y^1$). Then, the functional forms of the outcome regressions are
\begin{equation}
\mu_1(x) = \beta \ \text{ and } \ \mu_0(x) =\frac{ \beta}{ 1 + \exp(\psi^T X)}.
\end{equation}
Note that $\mu_a(x)$ does not follow a logistic model. Following \cite{kangschafer}, let $\psi = (-1, 0.5, -0.25, -0.1)$. To estimate the variance of the influence-function-based estimator, we use the variance from \eqref{eq:sandwichvariance}. As an alternative to the sandwich variance, we could use the bootstrap, since our estimator is asymptotically normal, although the bootstrap is computationally much slower.

\begin{figure}[ht]
  \centering
  \includegraphics[width=\textwidth]{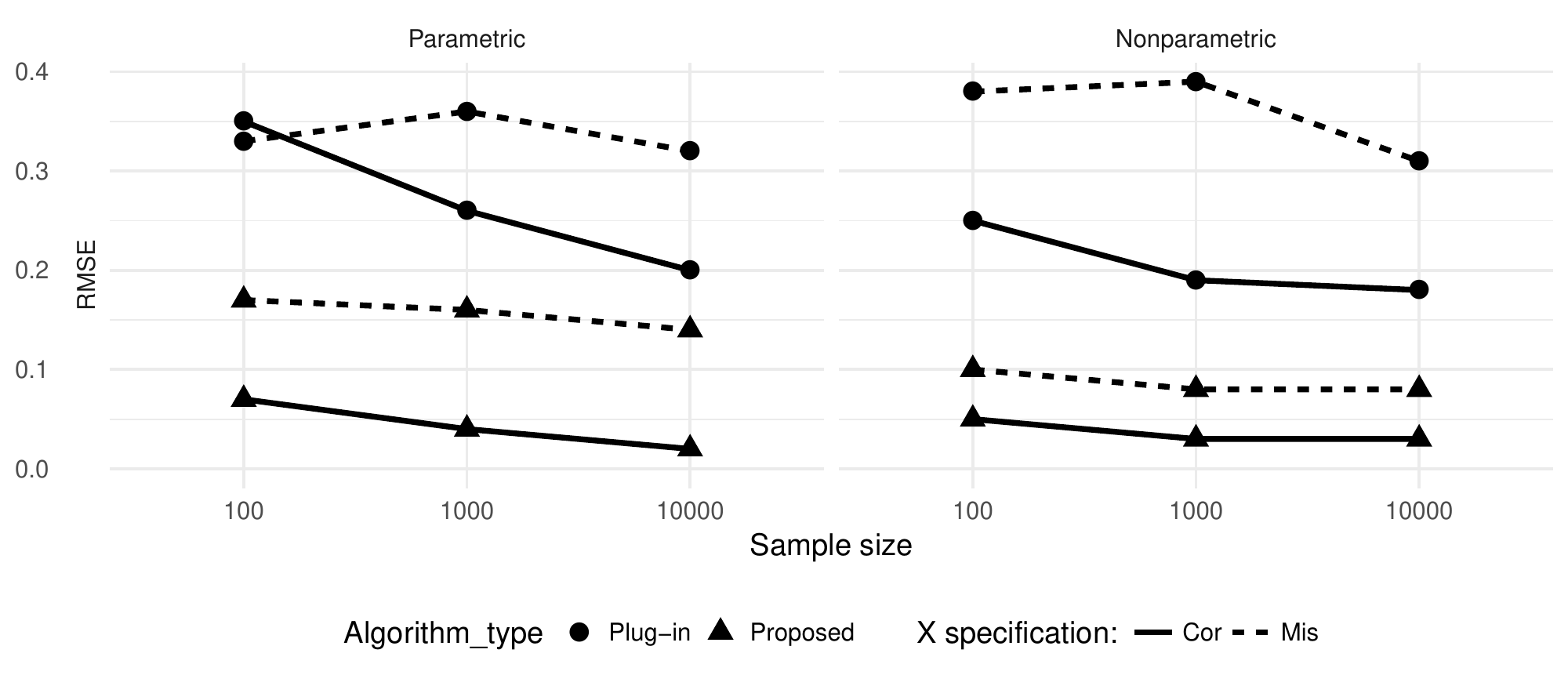} 
  \caption{Root-mean-squared-error (RMSE) across 100 simulations. The RMSE is higher for the plugin than for the proposed IFB estimator, and the misspecified model (or $X^*$ in the nonparametric case) has a slower decrease of error over sample size. See Table 1 for full results.} 
  \label{fig:RMSEsimulation}
\end{figure}

The purpose of the simulation was to test the convergence rates of the proposed estimator versus those of the plugin. Thus, we have set $\gamma = g$, whereas in reality the IFB method estimates a projection of the true function onto a parametric model (they are not equal, but instead $\gamma$ is projected onto $g$). Note that our simulation assumes $g=\gamma$, and thus we are not testing how projecting onto different forms of $g$ affects the estimates. Instead, we chose to focus on demonstrating the superiority of the IFB estimator over the plugin. 

With $J=100$ simulations, we assessed the performance of the estimators by using the root-mean-squared error (RMSE), absolute bias, and coverage:
\begin{align}
& \text{RMSE of }g(x;\hat\beta): \sqrt{  \frac{1}{J} \sum_{j=1}^J \left( g_j (x;\hat\beta) - g_j (x;\beta) \right)^2 }\\
& \text{Mean absolute bias of }g(x, \hat\beta): \frac{1}{J} \sum_{j=1}^J \left| ( g_j (x;\hat\beta) - g_j (x;\beta) ) \right| \\
& \text{Coverage: Proportion of simulations in which the confidence interval covered the true value.}
\end{align} 
Since our estimates for $g$ depend on $x$, we chose to evaluate the bias an RMSE for each of 2,000 simulated $X$ covariates, and then take the average of those to obtain one bias and one RMSE per simulation. To compare the performance of the estimators across sample sizes, we tested our method at three sample sizes: 100, 1,000, and 10,000. We used random forests to fit the nuisance parameters, but this could also be done with other nonparametric estimators.

\begin{table}[t]
\centering
\caption{Bias (RMSE) across 100 simulations. GLM and random forests were used for parametric and nonparametric estimation of the nuisance parameters.}
\begin{tabular}{cccccc}
\hline
Sample size & Method & Parametric & Parametric & Nonparametric & Nonparametric  \\
 &  & correctly specified ($X$) & misspecified ($X^*$) & with $X$ & with $X^*$ \\
 \hline
100 & Plugin & 0.36 (0.35) & 0.25 (0.33) & 0.09 (0.25) & 0.37 (0.38) \\ 
  & Proposed & 0.01 (0.07) & 0.04 (0.17) & 0.01 (0.05) & 0.02 (0.1) \\ 
 1000 & Plugin & 0.19 (0.26) & 0.22 (0.36) & 0.05 (0.19) & 0.34 (0.39) \\ 
  & Proposed & 0.00 (0.04) & 0.04 (0.16) & 0.00 (0.03) & 0.01 (0.08) \\ 
 10000 & Plugin & 0.07 (0.20) & 0.16 (0.32) & 0.06 (0.18) & 0.14 (0.31) \\ 
  & Proposed & 0.00 (0.02) & 0.03 (0.14) & 0.00 (0.03) & 0.01 (0.08) \\ 
\hline
\end{tabular}
\label{tab:simresults}
\end{table}

Figure \ref{fig:RMSEsimulation} (with more complete values given in Table \ref{tab:simresults}) shows the motivation for using our nonparametric IFB approach. If the model is correctly specified, the IFB estimates have a greater variance in terms of a difference in a constant than the plugin, but they still have root-n rate and inference under weak conditions. But misspecified parametric models for the nuisance functions are more common. In this case, the error of the plugin estimator does not decrease with sample size. Instead, there is an irreducible amount of bias the prevents the estimates from reaching the true value. With correctly specified parametric models for the nuisance functions, we expect our proposed IFB estimator to perform about as well as the correctly specified parametric plugin model, although in this case it performs better. With misspecified models for the nuisance functions, the IFB estimator performs better than the parametric plugin under misspecified models. Furthermore, if the model is misspecified, the IFB estimator estimates a well-defined quantity.

With nonparametric nuisance functions, the error of the plugin estimates decreases with increasing sample size, even when using the transformed covariates $X^*$. Since one can never guarantee one knows the true parametric model, it makes sense from this argument that one should want to use nonparametric models for the nuisance functions. However, nonparametric methods do not usually allow for the derivation of valid confidence intervals. This is why our proposed estimator is useful. The proposed IFB estimator has faster convergence rates than the nonparametric plugin, \emph{and} it allows for the derivation of confidence intervals under weak conditions.

Finally, the coverage of the confidence intervals for the IFB estimates are shown in Table \ref{tab:simresultscoverage}. The IFB estimator has close to 95\% coverage in the correctly specified parametric setting (which we would not expect to have in most cases) and in the nonparametric cases, although it has higher coverage with the simpler covariates. In the misspecified parametric setting it does not have high coverage, but this is expected because it is precisely the result of misspecification. We suggest using the nonparametric approach instead.

\begin{table}[t]
\centering
\caption{Coverage across 100 simulations using nonparametric estimation (random forests) for nuisance parameters. Full definition of terms given in Table 2.} 
\begin{tabular}{cccccc}
\hline
Sample size & Parametric & Parametric & Nonparametric & Nonparametric  \\
& correctly specified ($X$) & misspecified ($X^*$) & with $X$ & with $X^*$ \\
\hline
 100 & 92.10 & 80.82 & 94.22 & 90.01 \\ 
 1,000 & 94.42 & 75.23 & 94.50 & 92.21 \\ 
 10,000 & 95.19 & 69.35 & 95.23 & 93.08 \\ 
 \hline
\end{tabular}
\label{tab:simresultscoverage}
\end{table}


\section{Application}\label{sec:application}

\subsection{Data}

We used the dataset that was carefully gathered by \cite{jpal}, as part of a project from the Abdul Latif Jameel Poverty Action Lab (JPAL). \footnote{The dataset, documentation, and article about this study are available online at \url{https://www.povertyactionlab.org/evaluation/cleaning-springs-kenya}. The materials are kept in the Harvard Dataverse and are currently maintained by Professor Edward Miguel. Code for the application can be found online at \url{https://github.com/mariacuellar/probabilityofcausation/blob/master/Application}.} We applied our method to estimate the already mentioned probability of causation, 
\begin{equation}
PC(x) = P(Y^0=0 | Y=1, A=1, X=x),
\end{equation}
where $A=1$ is the exposure to high bacterial concentration (unprotected) springs, $A=0$ is the exposure to low bacterial concentration (protected) springs, $Y=1$ is the outcome that the household had a child with diarrhea within the past week, $Y=0$ is the outcome that no child had diarrhea within the past week, and $X$ is the vector of covariates for the households. These include: the child's gender (0=male, 1=female), the child's age at baseline, the mother's years of education at baseline, the water quality at baseline in terms of E.coli MPN (most probable number), whether the home as an iron roof (which could prevent some water cross-contamination at the home), the mother's hygiene knowledge at baseline (determined from a survey about prevention of diarrheal diseases), the latrine density near the house at baseline (how many latrines are near the house per unit area, which could affect the fecal matter concentration near springs), and the number of children in the household at baseline.

For our analysis, we restricted the sample in the same way as the \cite{jpal} researchers. We removed all individuals older than age three, all the children of reported users of multiple springs, and all the children flagged as having a problem weight, problem BMI or being a severe height outlier (as defined by the researchers). This reduced the sample size from 45,565 to 22,620. Furthermore, we removed all observations containing any missing values, like \cite{jpal} do, which reduced the sample size to 2,933. The drop in sample size from 45,565 to 2,933 is concerning, and a more careful view of the data would have to take the missing observations into account. For now, we assume that the values were missing at random, and understanding the consequences this assumption further remains as future work.

\begin{figure}[t]
  \centering
  \includegraphics[width=\textwidth]{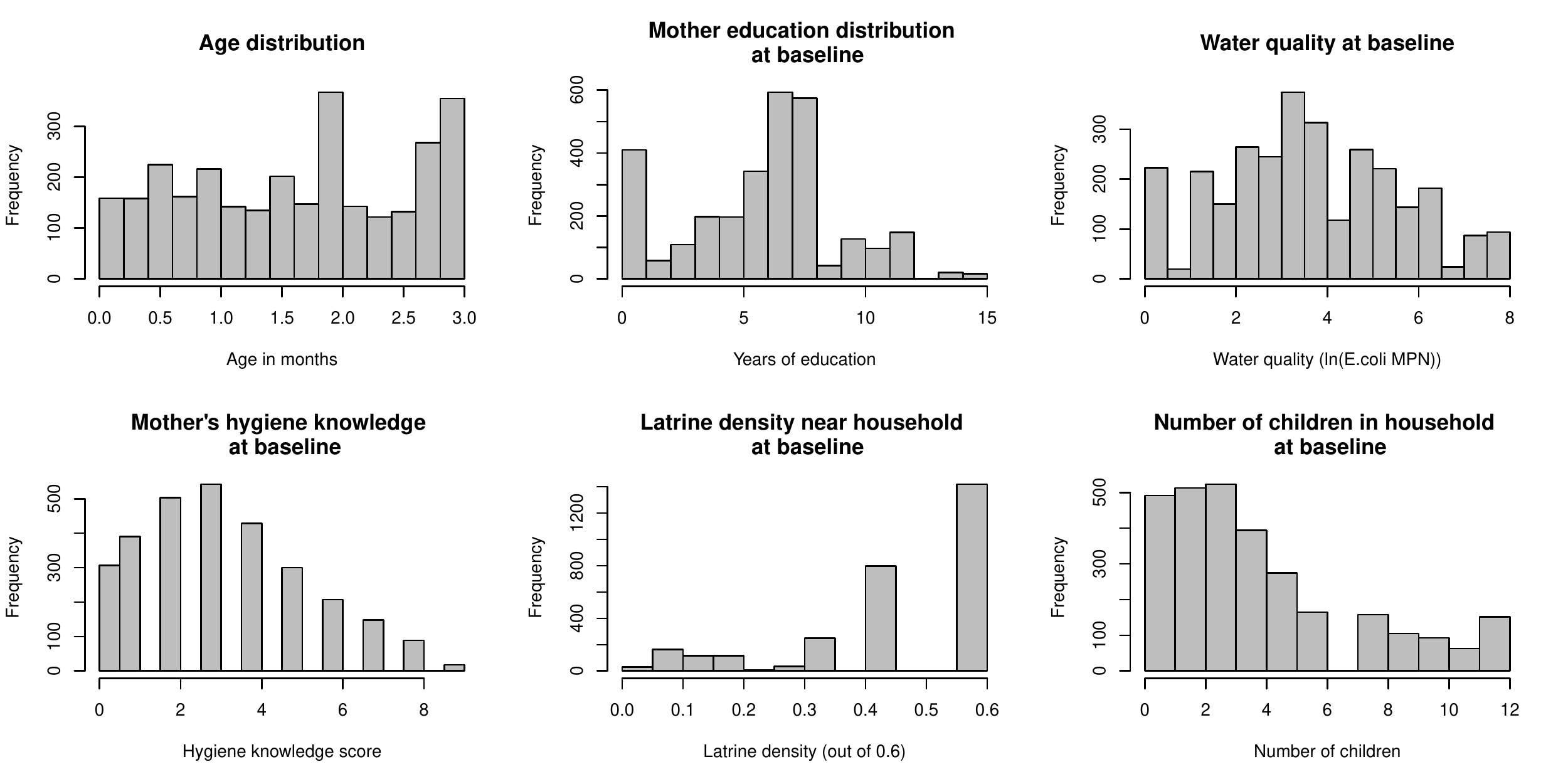} 
  \caption{Distributions of variables for the households in the selected sample for testing the IFB method.}
  \label{fig:EDA}
\end{figure}

After the data reduction, the sample had 1,446 children who drank from unprotected springs ($A=1$) and 1,487 children who drank from protected springs ($A=0$). There were 688 children with diarrheal diseases in the past week ($Y=1$), and 2,245 ($Y=0$) children without the diseases in the past week. There were 1,457 boys and 1,476 girls. 849 of the households did not have an iron roof, and 2,084 did. Figure \ref{fig:EDA} shows the histograms of the remaining six variables.

The researchers designed the experiment such that the following assumptions were likely to be satisfied:
\begin{enumerate}
\item \textbf{Positivity}: The researchers selected 400 water springs that were all suitable for treatment.
\item \textbf{Consistency}: The researchers tested for effects of interference by measuring the bacterial levels in treated and untreated springs that were near each other (within 3 km), and found ``little evidence of externalities in water quality.'' (\cite{jpal}) They concluded that it was unlikely that treating one spring had an effect on the bacterial concentration of an untreated spring. No interference, along with a well-defined binary treatment, means consistency is also satisfied.
\item \textbf{Monotonicity}: The researchers did not expect that protecting a spring would increase the bacterial concentration and thus increase the incidence of diarrhea in children.
\item \textbf{No unobserved confounders}: The treatment was given as a random assignment over the sites. The researchers checked that households were balanced in terms of demographic characteristics (such as income, education and health), and that the individuals did not use additional forms of bacterial reduction (such as home water chlorination, boiling or hygiene practices).
\end{enumerate}

\subsection{Results}

\subsubsection{Result 1: Estimated PC and confidence intervals for a single individual}\label{sec:estpcindiv}

\begin{table}[t]
\centering
\caption{Estimates of the probability of causation and the corresponding 95\% confidence intervals, for a ``median'' child with covariates given in 5.2.1.} 
\begin{tabular}{ lcc } 
\hline
 & \multicolumn{2}{c}{Probability of causation} \\ \cline{2-3} 
& Estimate & Confidence interval \\ 
\hline 
 Parametric plugin (logistic regressions) & 0.69 & (0.65, 0.73) \\
 Nonparametric plugin (random forests) & 0.24 & (undefined) \\ 
 Nonparametric proposed (random forests) & 0.12 & (0.11, 0.13) \\ 
 \hline
\end{tabular}
\label{tab:estimatedPCx}
\end{table}

Table \ref{tab:estimatedPCx} shows our results from estimated $PC(x)$ and its confidence intervals for an individual with specific covariates as an example, by using the currently-used methods and our proposed method. Recall that we are interested in the probability of causation: the probability that it was the exposure to high-bacterial springs, and not something else (such as transfer at school or a stomach virus), that caused the child diarrhea in the community of interest.

The covariates selected for Table \ref{tab:estimatedPCx} are the mode (for binary variables) and median (for continuous) variables of the data---we call this child the ``median'' child: A girl, aged six, whose mother has six years of education, with poor quality water at baseline (level 4), living in a house with an iron roof, baseline hygiene knowledge of 3, latrine density around nearest spring of 0.4 out of 0.6, with five children in the household. The covariates used here were selected for being the most common values for each variable as an example, but PC could be estimated for other covariates as well. We indeed calculated PC for a variety of covariates. Some of the variability of PC can be see in Result 3.

We used random forests to estimate the nuisance functions $\hat\eta=(\hat\pi, \hat\mu_0, \hat\mu_1)$. According to our estimator, the probability that the child diarrheal disease was caused by the exposure to dirty unprotected springs is 0.12 with a 95\% confidence interval of (0.11, 0.13). The parametric plugin method, which is the most commonly used, yielded a dramatically higher estimate of 0.69 with a confidence interval (0.65, 0.73), and the nonparametric plugin method yielded an estimate of 0.24.

The plugin estimates were dramatically different from the IFB estimates, indicating that the parametric plugin model might be highly misspecified. In other words, although the currently used method suggests that it is likely that the diarrheal disease was caused by the exposure to bacteria, once we remove the parametric assumptions and use our method, we find the opposite: that it is not likely that the diarrheal disease was caused by the exposure to bacteria.

\subsubsection{Result 2: The odds of causation} 

\begin{table}[t]
\centering
\caption{Regression results. Estimates are estimated coefficients of model shown in (31), which are the log-odds of causation shown in (32).} 
\begin{tabular}{lccccc}
\hline
 & Estimate & Robust std. error & z value & p value & \\ 
 \hline
Gender (male:0, female:1) & 0.17 & 0.09 & 1.99 & 0.05 & \\ 
 Age (in months) & -0.21 & 0.04 & -4.93 & 0.00 & *** \\  
 Mother's years of educ. & -0.03 & 0.01 & -2.07 & 0.04 & * \\ 
 Water quality at spring & -0.06 & 0.02 & -2.81 & 0.00 & *** \\ 
 Iron roof indicator & -0.05 & 0.09 & -0.57 & 0.57 & \\ 
 Mother's hygiene knowledge & -0.08 & 0.02 & -4.01 & 0.00 & *** \\ 
 Latrine density near household & -0.28 & 0.24 & -1.15 & 0.25 & * \\ 
 Diarrhea prevention score & -0.02 & 0.02 & -1.15 & 0.25 & * \\ 
 \hline
\end{tabular}
\\
Sample size: 2,933 observations. Significance codes: 0 `***' 0.001 `**' 0.01 `*' 0.05 `.' 0.1 ` ' 1. Variables are given at experiment baseline.
\label{tab:coefficientsinterpret}
\end{table}

Table \ref{tab:coefficientsinterpret} shows the estimated odds ratios under the column header ``Estimate''. For example, the coefficient on latrine density near household is -0.28 (with significance at the 0.01 level). This means that for a one-unit increase in the density, the expected change in log-odds of causation is 0.28 and the change in odds is $e^{(0.28)}=1.32$. So we can say for a one-unit increase in the density, we expect to see about 32\% increase in the odds of causation, i.e., the probability that the child's diarrhea was caused by exposure to the bacteria increases as latrine density increases. 

The factors that most affect whether the child's diarrheal disease was caused by his or her exposure to bacteria in the water springs are: child's age and latrine density near household. The mother's years of education, water quality at spring, mother's hygiene knowledge, and diarrhea prevention score all had a smaller effect, but they were significant at least at the 0.05 level.

This information can be used to learn which groups of people were actually harmed from being exposed to the bacteria in the water springs. The individuals with high odds of causation are the ones who are most harmed by the exposure. Policy makers might want to target these specific subgroups specifically, especially given a limited budget.

\subsubsection{Result 3: Variation in PC for a given continuous covariate}

\begin{figure}[t]
\centering
\begin{minipage}{.45\textwidth}
 \centering
 \includegraphics[width=\linewidth]{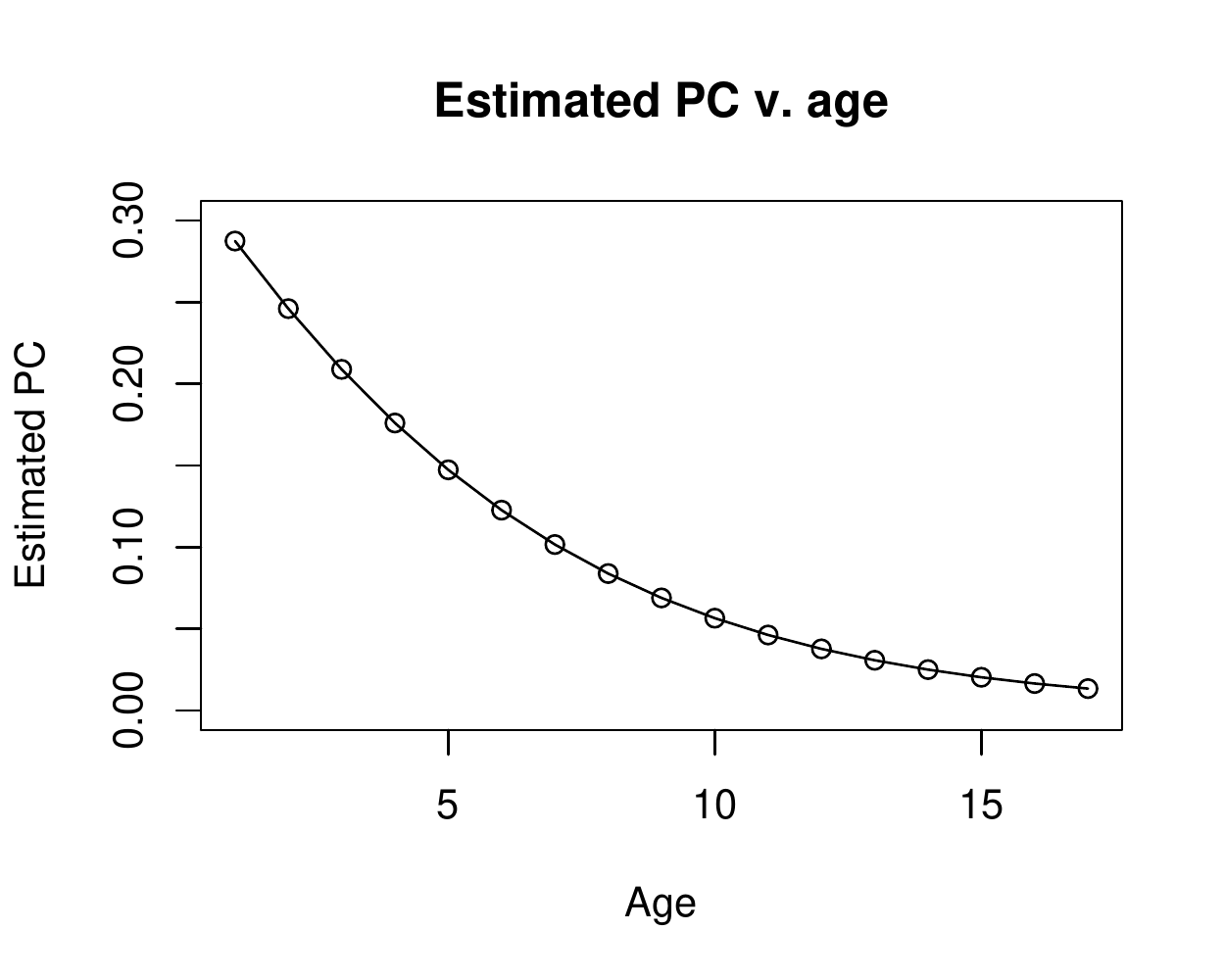}
 \captionof{figure}{Variation of the estimated probability of causation with changes in the child's age, ceteris paribus.}
 \label{fig:pcvsage}
\end{minipage} \quad \quad
\begin{minipage}{.45\textwidth}
 \centering
 \includegraphics[width=\linewidth]{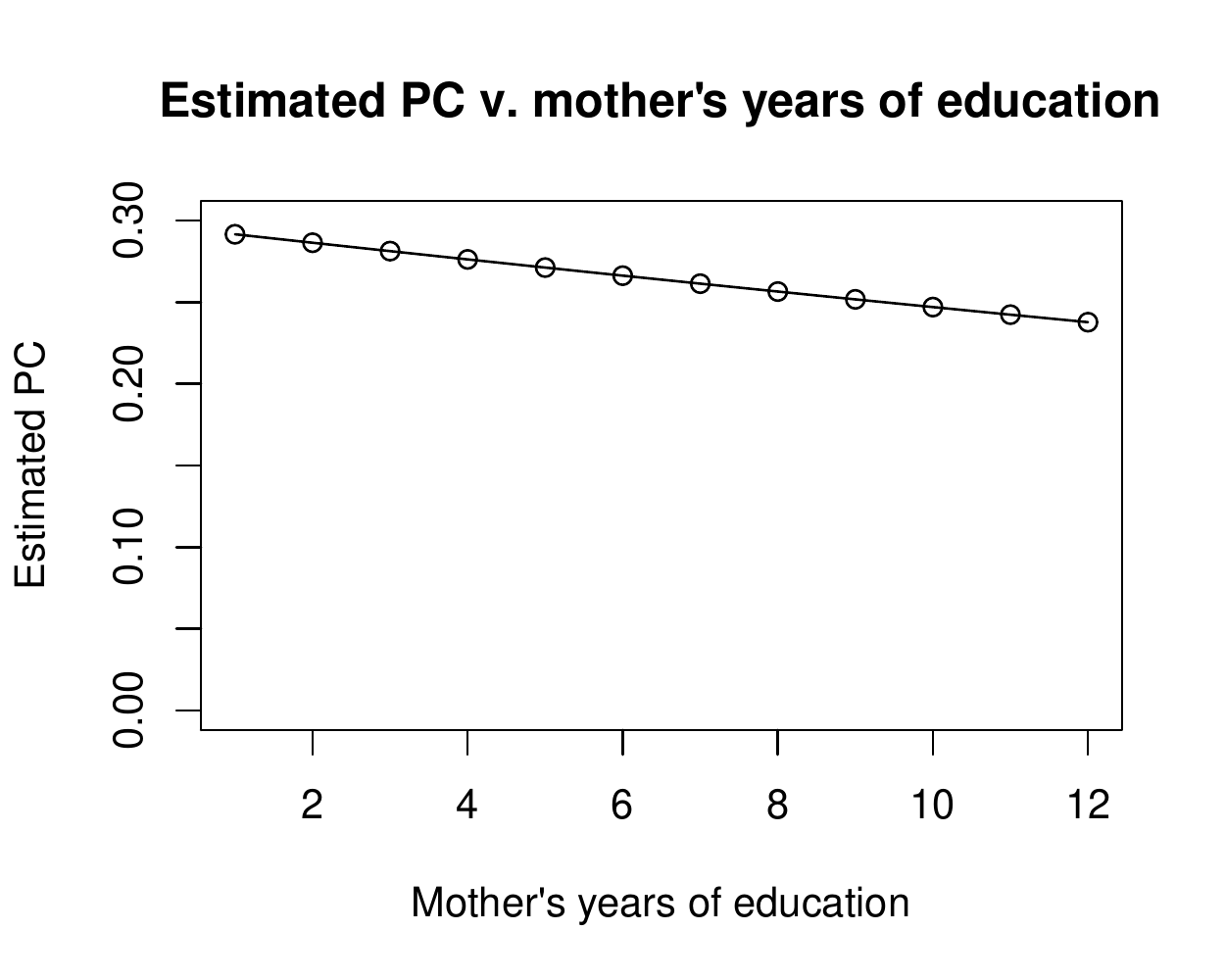}
 \captionof{figure}{Variation of the estimated probability of causation with changes in the mother's years of education, ceteris paribus.}
 \label{fig:pcvsmothereduc}
\end{minipage}
\end{figure}

We now estimate PC for an individual with the same covariates as in Result 1, except one covariate is varied to see how the estimated parameter changes. Figure \ref{fig:pcvsage} shows the variation of PC with age. As the child's age increases, the probability that the diarrheal disease was caused by a bacterial exposure decreases non-linearly. Figure \ref{fig:pcvsmothereduc} shows the variation of PC with the mother's years of education. We see that the more educated the mother, the less likely it is that the child's diarrheal disease was caused by the bacterial exposure. 

This exercise could be repeated for whatever variables might be of interest. Thus, it is an exercise in exploring the heterogeneity in estimates of PC across the space of covariates. Finding out which groups of individuals have higher estimates of PC could be very useful for researchers when deciding on which sub-populations to implement costly interventions.

Note the difference between studying heterogeneity in the average treatment effect (ATE) over different subgroups of the population, and studying the probability of causation (PC). The ATE shows for which groups the exposure causes an outcome, on average. The PC shows for which groups it was the exposure, and not something else, that actually caused an outcome. PC is more useful in reviewing what went wrong in specific cases (as one would do in a class action lawsuit, for instance), and it could help understand when there might be other causes of the outcome of interest.


\section{Discussion}\label{sec:discussion}

We derived a novel estimator for the probability of causation. This influence-function-based estimator is for a projection of the identified probability of causation, which does not require making parametric assumptions. This approach has applications in diverse settings, including epidemiology and the law.

The main contribution of our work was providing an estimator for the probability of causation without requiring strong parametric modeling assumptions by using influence functions. We provided an expression for the efficient influence function that involves only derivatives based on the known (and generally simpler) model and weight functions, rather than unknown complex and high-dimensional nuisance functions. Our formulation yields efficient estimators that are easier to construct in practice. We also described the asymptotic properties of our method under weak empirical process conditions and provided simulations to compare it to parametric and nonparametric plugin methods.

Regarding the projection, what is the difference between assuming the data-generating process actually follows an incorrect parametric model ($\gamma(x)=g(x;\beta)$) and using a projection approach? If the model $g(x;\beta)$ happens to be correct, then both the projection approach and a model-based approach will in general yield $\sqrt{n}$-consistent estimators with valid confidence intervals. This is true under roughly the same assumptions, though the projection approach may have a larger variance, but only by constant factors. On the other hand, if the model $g(x;\beta)$ is not correct, then the projection approach is honest about this possibility and still validly defined as a best-fitting approximation, while the parametric model-based approach is no longer formally applicable. In other words, regardless of whether a modeling assumption is correct, the projection will yield correct estimators, whereas assuming the data follow a model could yield biased estimates.

Regarding the choice of parametric model onto which we project the true data, what is the proper way to choose $g(x;\beta)$? After all, selecting different models could yield different estimates. How sensitive the results are to the choice of model depends on how complex the true function is. \cite{kennedylorchsmall2017}, section 3.4, wrote about how to do model selection using cross-validation in a projection setting for a continuous instrumental variable. They noted that in standard cross-validation it is possible to estimate the risk without bias since the risk does not require the estimation of nuisance parameters. However, in their setting, as in ours, the risk parameter depends on complex nuisance functions via the true parameter of interest and the parametric model being used. They derive an efficient influence function for the risk and then use this as a doubly robust loss function for cross-validation-based model selection. The results in this paper pave the way for a similar data-driven model selection criterion, but we leave the cross-validation type approach for selecting $g(x;\beta)$ as future work and simply use a logit model as our projection for illustration.

In our application with real-world data, we used the proposed method to study whether exposure to high-bacteria water is what caused diarrheal disease in children from a population in Kenya, or if the cause was something else. We estimated the probability of causation by using a dataset from an experiment performed by JPAL. We found that the plugin estimates were dramatically different from the IFB estimates, indicating that the parametric plugin model might be highly misspecified. This difference is likely a product of the misspecified plugin model and not our choice of projection model. Our method suggests the opposite: Because PC is low, it is unlikely that the diarrheal disease was caused by the exposure to the high concentration of bacteria in the drinking water. We also found that different populations had different values of PC, that is, there was heterogeneity between groups of individuals. This is an interesting result insofar as it suggests that although some individuals developed the outcome, it is not necessarily the case that it was caused by the exposure, and thus that there might be other causes of this childhood disease. It would be valuable to perform a further medical evaluation to evaluate which groups had higher PC, since future interventions might be more effective for those groups.

The factors that most affect whether the child's diarrheal disease was caused by his or her exposure to bacteria in the water springs are: age, the mother's years of education, water quality at spring, mother's hygiene knowledge, latrine density near household, and diarrhea prevention score. This information could be used to learn for which groups of people the risk of harm from being exposed to the bacteria in the water springs is high. The individuals with high odds of causation are the ones who are most harmed by the exposure. Policy makers might want to target these specific subgroups.

Apart from providing useful information about this specific population in Western Kenya, this application served as an example of how one might estimate and interpret the probability of causation, as well as the estimated coefficients of a logistic regression representing PC, the odds of causation. The method derived in this article provides a novel way of analyzing data from randomized trials (and other datasets following the assumptions we listed), and thus it could be used to find new insights about causation and public policy. Further work needs to be done to relax exchangeability, or no unobserved confounders. Though this assumption is satisfied by randomized trials, it is an unrealistic assumption for some applications. Nevertheless, estimating PC is appropriate whenever researchers are interested in finding out whether a specific intervention had the effect intended by the policymaker.



\vspace{-.1in}
\section*{Conflict of interest}
The authors have no conflict of interest to report.

\bibliographystyle{rss.bst}
\bibliography{pcausation.bib}


\section{Appendix}

\subsection{Proof of Theorem 1}

The first step is deriving the influence functions of the nuisance parameters $\eta(x) = ( \mu_0(x), \mu_1(x), \pi(x) )$ in the simple discrete case. For this task, we cite a useful result. 

\begin{lemma}
When $X$ is discrete, the efficient influence function (EIF) for the parameter $\mathbbm{E}(Y \mid X=x)$ is given by
\begin{equation}\label{eq:edsrule}
\varphi(Z) = \frac{\mathbbm{1}( X = x )}{P(X=x)} (Y - \mathbbm{E} (Y \mid X=x) ),
\end{equation}
where $Z=(X,Y)$.
\end{lemma}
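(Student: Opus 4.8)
The plan is to derive the efficient influence function directly from its defining property as a pathwise derivative. Recall that an influence function $\varphi$ for the parameter $\psi(P)=\mathbbm{E}_P(Y\mid X=x)$ at a fixed value $x$ must satisfy
\[
\frac{\partial}{\partial\epsilon}\psi(P_\epsilon)\Big|_{\epsilon=0}=\mathbbm{E}\left[\varphi(Z)s(Z)\right]
\]
for every regular parametric submodel $\{P_\epsilon\}$ through $P$ at $\epsilon=0$, with score $s(Z)=\frac{\partial}{\partial\epsilon}\log p_\epsilon(Z)|_{\epsilon=0}$ satisfying $\mathbbm{E}[s(Z)]=0$. First I would write the target explicitly in the discrete case as a ratio of cell probabilities,
\[
\psi(P)=\frac{\sum_y y\,p(x,y)}{\sum_y p(x,y)},\qquad p(x,y)=P(X=x,Y=y),
\]
so that $\psi$ becomes an explicit smooth function of finitely many probabilities.

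Next I would differentiate this ratio along the submodel, writing $\frac{\partial}{\partial\epsilon}p_\epsilon(x,y)|_0=p(x,y)s(x,y)$ and applying the quotient rule. The two resulting terms, after factoring the conditional mean $\mathbbm{E}(Y\mid X=x)$ out of the second, combine into
\[
\frac{\partial}{\partial\epsilon}\psi(P_\epsilon)\Big|_0=\frac{1}{P(X=x)}\sum_y\big(y-\mathbbm{E}(Y\mid X=x)\big)\,p(x,y)\,s(x,y).
\]
The key recognition step is that this sum is exactly $\mathbbm{E}\big[\frac{\mathbbm{1}(X=x)}{P(X=x)}(Y-\mathbbm{E}(Y\mid X=x))\,s(Z)\big]$, i.e.\ the inner product $\mathbbm{E}[\varphi(Z)s(Z)]$ with $\varphi$ the claimed function. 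A slightly cleaner equivalent route, which I would likely present as the main computation, is the Gateaux (von Mises) derivative along the contamination path $P_\epsilon=(1-\epsilon)P+\epsilon\delta_{\tilde z}$: differentiating the same ratio in $\epsilon$ at $0$ yields $\varphi(\tilde z)$ directly, and the submodel identity above then certifies it as a valid gradient.

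Having produced the candidate $\varphi$, I would verify the two properties that qualify it as the efficient influence function. First, it has mean zero, since $\mathbbm{E}[\mathbbm{1}(X=x)(Y-\mathbbm{E}(Y\mid X=x))]=\sum_y y\,p(x,y)-\mathbbm{E}(Y\mid X=x)\,P(X=x)=0$, and it has finite variance under the standing moment conditions. Second, because the model is fully nonparametric, the tangent space is the entire space $L_2^0(P)$ of mean-zero square-integrable functions; the gradient representing the pathwise derivative is therefore unique and equals its own projection onto the tangent space. This uniqueness is exactly what upgrades $\varphi$ from merely \emph{an} influence function to \emph{the} efficient one.

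I do not expect a serious obstacle in the computation itself, which is elementary once $\psi$ is written as a ratio of cell probabilities. The one point requiring care is the efficiency claim: one must invoke that in a nonparametric model the tangent space is all of $L_2^0(P)$, so that any valid gradient coincides with the efficient influence function. Beyond that, the main place an error could creep in is the bookkeeping of which expectations are taken conditionally on $X=x$ versus unconditionally, which is why I would keep the explicit cell-probability representation throughout.
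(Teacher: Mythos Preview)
Your proposal is correct and follows essentially the same route as the paper: write $\mathbbm{E}(Y\mid X=x)=\mathbbm{E}(Y\mathbbm{1}(X=x))/P(X=x)$ and differentiate this ratio via the quotient rule. The paper does this more informally, invoking the heuristic that influence functions obey ordinary derivative rules (so the IF of a simple mean $\mathbbm{E}(f(Z))$ is $f(Z)-\mathbbm{E}(f(Z))$ and the quotient rule applies directly), whereas you carry out the same computation rigorously from the pathwise-derivative definition and add the explicit tangent-space argument for efficiency, which the paper's lemma proof omits.
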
 

\begin{proof}
A simple proof of the above follows from the heuristic that influence functions are derivatives, and so usual derivative rules apply. In particular since $\mathbbm{E} (Y \mid X=x) = \mathbbm{E}(Y \mathbbm{1} (X=x) )/P(X=x),$ then by the quotient rule, the EIF of the ratio is
\begin{equation}
\frac{Y \mathbbm{1}(X=x) - \mathbbm{E} (Y \mathbbm{1}(X=x))}{P(X=x)} - \frac{\mathbbm{E}(Y \mathbbm{1}(X=x) ) (\mathbbm{1}(X=x)- P(X=x))}{P(X=x)^2} ,
\end{equation}
which simplifies to $\frac{\mathbbm{1}(X=x) }{P(X=x)} (Y - \mathbbm{E} (Y \mid X=x) = \varphi(Z)$, the desired result.
\end{proof}

Continuing our heuristic of working in the simple discrete case, we seek to derive the influence function $\varphi(Z)$ of the moment condition parameter at a given $\beta$
\begin{equation} \label{eq:momentcon}
  \Psi(\beta) = \sum_x h(t) \{ \gamma(t) - g(t;\beta) \} p(t),
\end{equation}
where $h(X) := \frac{\partial g(X;\beta)}{\partial \beta} w(X)$. 
When we derive the EIF in the discrete case we will be able to see the continuous generalization.

We use the notation $IF(\bullet)$ to denote the influence function of a quantity $\bullet$. The influence functions of $\mu_1(x)$, $\mu_0(x)$, and $P(x)$ are
\vspace{-.2in}
\begin{align}
IF(\mu_0(x)) & = IF ( \mathbbm{E}( Y \mid A=0, X=x ) ) = \frac{\mathbbm{1}(A=0)\mathbbm{1}(X=x)(Y-\mu_0(x))}{(1-\pi(x))p(x)} ,\\
IF(\mu_1(x)) & = IF( \mathbbm{E}( Y \mid A=1, X=x ) ) = \frac{\mathbbm{1}(A=1)\mathbbm{1}(X=x)(Y - \mu_1(x))}{\pi(x) p(x)} , \\
IF(p(x)) & = IF ( \mathbbm{E} ( \mathbbm{1}(X=x) )) = \mathbbm{1}(X=x) - p(x). 
\end{align}

Recall the heuristic that influence functions are derivatives. Thus, we employ the chain rule on \eqref{eq:momentcon} to get
\begin{align}
IF(\Psi(\beta)) := \varphi(Z) = & \sum_x h(t) \left[ \{IF (\gamma(t)) - g(t;\beta)\}p(t) + \{\gamma(t) - g(t;\beta)\} IF (p) \right] \\
 = &\sum_x h(t) \{IF (\gamma(t)) - \gamma(t) \} p(t) + \gamma(t) - g(t;\beta). \label{eq:momentconexpanded}
\end{align}

To find $IF (\gamma(t))$, we use the quotient rule,
\begin{equation}
IF(\gamma(x)) = IF \left( 1-\frac{\mu_0(x)}{\mu_1(x)} \right) = \frac{\mu_0(x) IF(\mu_1(x)) - IF(\mu_0(x))\mu_1(x)}{\mu_1(x)^2}.
\end{equation}
We plug $IF(\gamma(x))$ into \eqref{eq:momentconexpanded},
\begin{equation}
\varphi(Z) = \sum_x h(t) \left\{ \left( \frac{\mu_0(t) IF(\mu_1(t)) - IF(\mu_0(t))\mu_1(t)}{\mu_1(t)^2} - \gamma(t) \right) p(t) + \gamma(t) - g(t;\beta) \right\},
\end{equation}
we substitute in $IF(\mu_0(x)), IF(\mu_1(x)),$ and $IF(p(x))$, and account for the fact that the sum $\sum_x$ combined with the indicator $\mathbbm{1}(X=x)$ ``picks out'' only the cases with $X$. Finally, we get that the efficient influence function (EIF) of the moment condition at a fixed $\beta$ is \small
\begin{equation}
\varphi(Z) = h({X}) \left\{ \frac{\mu_0({X})}{\mu_1({X})^2} \frac{A}{\pi(X)} \bigg( Y-\mu_1({X}) \bigg)
- \frac{1}{\mu_1({X})} \left( \frac{1-A}{1-\pi({X})} \right) \bigg(Y-\mu_0({X})\bigg) + \bigg( \gamma(X) - g(X;\beta) \bigg) \right\} - \Psi(\beta), 
\end{equation} \normalsize
as desired.

The result can be proved more formally by checking that  $\varphi(z)=\varphi(z;P)$ is a pathwise derivative in the sense that
$$ \frac{\partial\Psi(P_\epsilon)}{\partial \epsilon} \Bigm|_{\epsilon=0} = \int \varphi(z; P) \left( \frac{\partial \log dP_\epsilon}{\partial \epsilon} \Bigm|_{\epsilon=0}\right) dP $$
where $P_\epsilon$ is any smooth one-dimensional parametric submodel with $P_0=P$ (\cite{bickel,van2003unified,vandervaart}), and $\Psi(P_\epsilon)$ is the moment condition evaluated at such a submodel. Then the result follows since under a nonparametric model there is only one influence function and it is thus necessarily the efficiency influence function.

\subsection{Proof of Theorem 2}

Note that $\hat\Psi (Z;\hat\beta, \eta)= 0 $ by the definition of our estimator and $\Psi (X;\beta, \eta)=0$ by the definition of our parameter. Also, note that $\hat\Psi (Z;\hat\beta, \eta) = \mathbbm{P}_n \hat\varphi (Z;\hat\beta, \eta) $ and $\Psi (X;\beta, \eta) = P \varphi(Z;\beta,\eta)$ by the definition of influence functions. 

Thus, we can do the usual von Mises expansion on 
$$
0 = \hat\Psi (Z;\hat\beta, \eta) - \Psi (X;\beta, \eta)= \mathbbm{P}_n \hat\varphi (Z;\hat\beta, \eta) - P \varphi(Z;\beta,\eta)
$$ 
into four terms (where we have omitted the arguments on the second line for brevity),
\small \begin{align} 
0 & = \mathbbm{P}_n \hat\varphi (Z;\hat\beta, \eta) - P \varphi(Z;\beta,\eta) \nonumber \\
& = (\mathbbm{P}_n-P) (\hat\varphi - \varphi) + (\mathbbm{P}_n-P) \varphi + P(\hat\varphi - \varphi) \\
& = \underbrace{(\mathbbm{P}_n-P) (\hat\varphi - \varphi)}_{\text{I}} 
+ \underbrace{(\mathbbm{P}_n-P) \varphi }_{\text{II}} 
+ \underbrace{P (\hat \varphi(\hat\beta) - \varphi(\hat\beta) }_{\text{III}}
+ \underbrace{P (\varphi(\hat\beta) - \varphi(\beta)) }_{\text{IV}} . 
\label{eq:fourterms}
\end{align} \normalsize

\textbf{I)} Term I can converge to zero for two reasons. If we assume the true functions ${\pi}, {\mu}_0, {\mu}_1$ and their estimators $\hat{\pi}, \hat{\mu}_0, \hat{\mu}_1$ are in Donsker classes, and the functions in the class containing $\hat{\pi}$ are uniformly bounded away from zero and one, Term I is $o_\mathbbm{P}(n^{-1/2})$. Briefly, for the reader unfamiliar with empirical processes, classical empirical process theory deals with the empirical distribution function based on $n$ i.i.d. random variables. If $X_1, X_2, \dots, X_n$ are i.i.d. real-valued random variables with distribution function $P$. Donsker classes are important in empirical process theory. They are sets of functions with the property that empirical processes indexed by these classes converge weakly to a certain Gaussian process. This property allows us to say that Term I is $o_\mathbbm{P}(n^{-1/2})$. See \cite{kennedy2016}, Section 4.2, and \cite{vandervaart} Chapter 19, for a more in-depth explanation of the Donsker property and how it implies the convergence in distribution to zero of these terms. Alternatively, we can avoid Donsker classes altogether if we use data splitting, and $|| \eta(x) - \hat{\eta}(x) || = o_p(1)$ (\cite{kennedy2016}), which is a weak assumption. In this case, Term I will also be $o_\mathbbm{P}(n^{-1/2})$.

\textbf{II)} Term II converges to a normal distribution by the classical central limit theorem, as long as its mean and variance exist. This is because 
\begin{equation}
(\mathbbm{P}_n-P) \varphi (\beta, \eta) = \mathbbm{P}_n \varphi (\beta , \eta), 
\end{equation}
since $P\varphi (\beta, \eta)=0$ by definition of our parameter. $\mathbbm{P}_n \varphi (\beta, \eta)$ is a sample average.

\textbf{III)} Next, we want to show that Term III is $o_\mathbbm{P}(n^{-1/2})$. We will need iterated expectations, conditioning on $X$, conditioning on $A$, and using the fact that $\mathbbm{1}(A=1) \mu_A = A\mu_1.$ Our goal here is to show that $P (\hat \varphi(\hat\beta)$ is at most a second-order term:
\vspace{-.2in}
\begingroup
\allowdisplaybreaks
\begin{align*}
& P (\hat \varphi(\hat\beta) - \varphi(\hat\beta) )\nonumber \\
& =\frac{\hat{\mu}_0}{\hat{\mu}_1} \frac{\pi}{\hat{\pi}} \frac{(\mu_1 - \hat{\mu}_1)}{\hat{\mu}_1} - \frac{(1-\pi)}{1-\hat{\pi}} \frac{(\mu_0 - \hat{\mu}_0)}{\hat{\mu}_1} - ( \gamma - \hat{\gamma} ) \nonumber \\
& = (1-\hat{\gamma}) \frac{\pi}{\hat{\pi}} \frac{(\mu_1 - \hat{\mu}_1)}{\hat{\mu}_1} - \frac{(1-\pi)}{1-\hat{\pi}} \frac{(\mu_0 - \hat{\mu}_0)}{\hat{\mu}_1} - ( \gamma - \hat{\gamma} ) \nonumber \\
& = \underbrace{\left[ (1-\hat{\gamma})\frac{(\pi-\hat{\pi})}{\hat{\pi}} \frac{(\mu_1 - \hat{\mu}_1)}{\hat{\mu}_1} + \frac{(\pi-\hat{\pi})}{1-\hat{\pi}} \frac{(\mu_0 - \hat{\mu}_0)}{\hat{\mu}_1} \right]}_{\text{= Second-order terms $(SO)$}} + (1-\hat{\gamma}) \frac{(\mu_1 - \hat{\mu}_1)}{\hat{\mu}_1} - \frac{(\mu_0 - \hat{\mu}_0)}{\hat{\mu}_1} - ( \gamma - \hat{\gamma} ) \nonumber \\
& = SO + \frac{(\mu_1 - \hat{\mu}_1)}{\hat{\mu}_1} -\hat{\gamma}\frac{(\mu_1 - \hat{\mu}_1)}{\hat{\mu}_1} - \frac{(\mu_0 - \hat{\mu}_0)}{\hat{\mu}_1} - ( \gamma - \hat{\gamma} ) \nonumber \\
& \leq c \sum_{a\in \{ 0,1 \}} || \hat{\pi} - \pi || \cdot || \hat{\mu}_a - \mu_a || + c || \hat{\gamma} - \gamma || \cdot || \hat{\mu}_1 - \mu_1 ||. \label{eq:secondordererror}
\end{align*}
\endgroup
This final inequality follows from Cauchy--Schwartz $(P(f g) \leq ||f|| \cdot ||g||)$. Thus, we showed that Term III is a second-order error, meaning that it is a sum of a product of errors rather than a single error. This shows that if $\pi(x)$ and $\mu_a(x)$ are estimated nonparametrically, then for Term IV to be $o_\mathbbm{P}(n^{-1/2})$ it is required that $||\hat{\pi} - \pi||=o_\mathbbm{P}(n^{-1/4})$ \emph{and} $||\hat{\mu}_a - {\mu}_a||=o_\mathbbm{P}(n^{-1/4})$. 

Note that this result is slightly different than the analogous one for doubly robust (DR) estimators because we do not have a single product of the error term for the propensity score and the outcome regression. If one uses parametric models to estimate $\hat{\pi}(x)$ and $\hat{\mu}_a(x)$, to have an asymptotically normal result, the asymptotic normality (AN) of our estimator requires that the outcome regressions $\hat{\mu}_a(x)$ be consistent. In other words, as long as the outcome regression is properly specified, our estimator produces consistent results. This is unlike the DR estimators for which either $\pi(x)$ or $\mu_a(x)$ can be estimated properly to still arrive at a consistent estimator. However, we propose that these nuisance functions be estimated nonparametrically. With our estimator, there is no need to take the risk of having misspecified models since our estimator is AN with weak structural requirements.

\textbf{IV)} For Term IV, we assume that $g(x;\beta)$ is differentiable in $\beta$, so we can Taylor expand the first term, $P \varphi (\hat{\beta}),$ about $\beta$,
\begin{equation}
\label{eq:taylor}
P \varphi (\hat{\beta}) \approx P \varphi (\beta) + \frac{\partial P \varphi (\beta)}{\partial \beta} (\hat{\beta} - \beta) + \frac{1}{2} \frac{\partial ^2 \varphi (\beta)}{\partial \beta^2} ( \hat{\beta}-\beta)^2.
\end{equation}
The second-order term in the Taylor expansion is $o_\mathbbm{P}(n^{-1/2})$. This is because the second derivative is $O_p(1)$ since, by assumption, there exists a ball $B$ around $\beta$ such the second derivative is dominated by its norm for every $\beta$, and $(\hat{\beta} - \beta)^2$ is $o_\mathbbm{P}(n^{-1/2})$. So, when the two terms are multiplied together they are $o_\mathbbm{P}(n^{-1/2})$. For the first-order term, we let
\begin{equation}
M:= \frac{\partial P \varphi (\beta)}{\partial \beta} = P \frac{\partial \varphi (\beta)}{\partial \beta}.
\end{equation}
By passing the zeroth-order term to the left-hand side we get that Term IV is
\begin{equation}
\label{eq:betahatminusbeta}
P ( \varphi (\hat\beta) - \varphi (\beta) ) = M (\hat{\beta} - \beta) + o_\mathbbm{P}(n^{-1/2}),
\end{equation}
We are interested in a central-limit-theorem-type result, so we can isolate $(\hat{\beta} - \beta)$ to get
\begin{equation}
(\hat\beta - \beta) = (\mathbbm{P}_n - P) M^{-1} \varphi (\beta) +o_\mathbbm{P}(n^{-1/2}).
\end{equation}
It follows that the (multidimensional) asymptotic variance of $\hat{\beta}$ is
\begin{equation}
\label{eq:ansv}
\sqrt{n}(\hat\beta - \beta) \indist N \left(0, M^{-1} \mathbbm{E}(\varphi \varphi^T) (M^{-1})^T \right), 
\end{equation}

To summarize, the asymptotic results from the four terms are
\begin{equation}\small
0= \underbrace{(\mathbbm{P}_n-P) (\varphi (\hat{\beta}, \hat{\eta}) - \varphi (\beta, \eta))}_{o_\mathbbm{P}(n^{-1/2})} 
+ \underbrace{(\mathbbm{P}_n-P) \varphi (\beta, \eta)}_{\text{CLT}} 
+ \underbrace{P (\varphi (\hat{\beta}, \hat{\eta}) - \varphi (\beta, \hat{\eta}))}_{(\hat{\beta} - \beta) M + o_\mathbbm{P}(n^{-1/2})}
+ \underbrace{P (\varphi (\beta, \hat{\eta}) - \varphi (\beta, \eta) )}_{o_\mathbbm{P}(n^{-1/2})}. 
\end{equation}\normalsize
We combine them by using Slutsky's Theorem. Slutsky's Theorem says that, for two sequences of random variables $X_n$ and $Y_n$, that have $X_n \xrightarrow{p} X$ and $Y_n \xrightarrow{p} c$, for constant $c$, then $X_n + Y_n \xrightarrow{p} X + c$. Thus, by Slutsky's Theorem (applied twice since there are four terms), the sum of terms I, II, III, and IV converges to the normal distribution shown in \eqref{eq:ansv} (plus zero for the other terms), as desired.


\end{document}